\documentclass[12pt]{article}
\usepackage[margin=0.9in]{geometry}

\usepackage[utf8]{inputenc}
\usepackage{bm}
\usepackage{bbm}
\usepackage{stmaryrd}
\usepackage{amsmath}
\usepackage{amsthm}
\usepackage{amssymb}
\usepackage{xcolor}
\usepackage{graphicx}
\usepackage{ulem}
\usepackage{url}
\usepackage{comment}
\usepackage{braket}

\DeclareFontFamily{U}{mathx}{}
\DeclareFontShape{U}{mathx}{m}{n}{<-> mathx10}{}
\DeclareSymbolFont{mathx}{U}{mathx}{m}{n}
\DeclareMathAccent{\widehat}{0}{mathx}{"70}
\DeclareMathAccent{\widecheck}{0}{mathx}{"71}

\newcommand{\tr}{\mathrm{Tr}}

\newcommand{\h}{{\mathcal H}}

\renewcommand{\r}{{\rm R}}

\newcommand{\s}{{\rm S}}
\newcommand{\z}{{\rm Z}}
\newcommand{\cx}{{\mathbb C}}
\newcommand{\rx}{{\mathbb R}}

\newcommand{\ve}{{\varepsilon}}

\newcommand{\bbbone}{\mathchoice {\rm 1\mskip-4mu l} {\rm 1\mskip-4mu l}
{\rm 1\mskip-4.5mu l} {\rm 1\mskip-5mu l}}
\newtheorem{thm}{Theorem}
\newtheorem{prop}{Proposition}
\newtheorem*{prop*}{Proposition}

\newtheorem{cor}{Corollary}

\newtheorem*{definition}{Definition}

\newcommand{\llim}{\lim_{\substack{t\rightarrow 0_+,\,\lambda\rightarrow\infty\\
\lambda t =  \tau \ \rm fixed}}}

\usepackage{authblk}
\title{Ultrastrongly coupled open systems \\
and fine grained time
}
\author[1,2]{Stefano Marcantoni\footnote{stefano.marcantoni@gssi.it}}
\author[3]{Marco Merkli\footnote{merkli@mun.ca}}
\affil[1]{Laboratoire J.~A.~Dieudonn\'e (LJAD) 

Universit\'e C\^ote d'Azur

Parc Valrose, 06108 Nice, France
\medskip
}
\affil[2]{Mathematics Division 

Gran Sasso Science Institute

Viale Rendina 24-26-28, 67100 L'Aquila, Italy
\medskip
}
\affil[3]{Department of Mathematics and Statistics

Memorial University of Newfoundland

St.~John’s, NL, Canada A1C 5S7
}

\begin{document}

\maketitle

\begin{abstract} 
We study the dynamics of a $d$-level quantum system coupled to a bosonic reservoir when the coupling constant is large. It is known that in the limit of infinite coupling strength, the system undergoes an instantaneous nonselective measurement, resulting in the immediate decoherence in the measurement basis, followed by  a unitary Zeno dynamics. Here we resolve this dynamical process by introducing a fine grained scaling regime of short times proportional to the inverse coupling. We provide a rigorous derivation of the open system dynamics in this regime of ultrastrong coupling and demonstrate how decoherence unfolds continuously in the new time scale. We show that Markovian dynamics which are not given by semigroups arise naturally, in contrast to what happens in the weak coupling theory. 
\end{abstract}

\section{Introduction}

An open quantum system is a (typically small) quantum system $\s$ interacting with another system (much larger and typically infinite)  called the reservoir $\r$. The compound $\s\r$ evolves according to the Schr\"odinger-von Neumann equation generated by a total, interacting Hamiltonian and one is interested in the reduced dynamics of $\s$, describing the evolution of observables (or that of the reduced density matrix) pertaining to $\s$ alone. This dynamics is in general dissipative and non-Markovian, and in all but very special cases, the exact form is too complicated to be determined. For this reason, the derivation of good approximations in suitable coupling regimes and time-scales is a topic of intensive research. 

Well known results have been obtained in the seventies \cite{Da74,Da76} for the so-called {\it van-Hove} or {\it ultraweak coupling regime}, where a vanishingly small coupling constant $\lambda \to 0$ and a long-time scale $t \to \infty$ are considered such that $\tau' = \lambda^2 t$ is finite \cite{VH55} (see Figure \ref{Fig:Regime}). In this limit, the system dynamics is given by a Markovian semigroup whose generator has a specific, so-called  GKSL structure (Gorini-Kossakowski-Sudarshan-Lindblad) \cite{GKS76,Lindblad,ChrPas}. The dynamical equation for the system density matrix is the (Markovian) master equation. The literature on the ultraweak coupling regime is very large and still developing | refined approximations in the same scaling regime were obtained more recently in \cite{Rivas10,Rivas17}. When the coupling constant $\lambda$ is small but fixed, independent of time $t\ge 0$, then one talks of the {\it weak coupling regime}. Showing that the master equation is valid in this regime for all times $t\ge 0$ | that is with error bounds small in $\lambda$ and uniform in time $t\ge0$ | is technically somewhat more involved. This has been done by means of the quantum resonance theory \cite{MSB2008,MMAOP,MMQI,MMQII}. In \cite{MM22} the weak coupling theory is extended to initially correlated $\s\r$ states and to the dynamics of reservoir observables. It is proved there that the Born and the Markov approximations are valid uniformly in time $t\ge 0$ (even as $t\rightarrow\infty$) provided $\lambda$ is small, fixed.

\begin{figure}[h!]
\centering
\includegraphics[width=.96\textwidth]{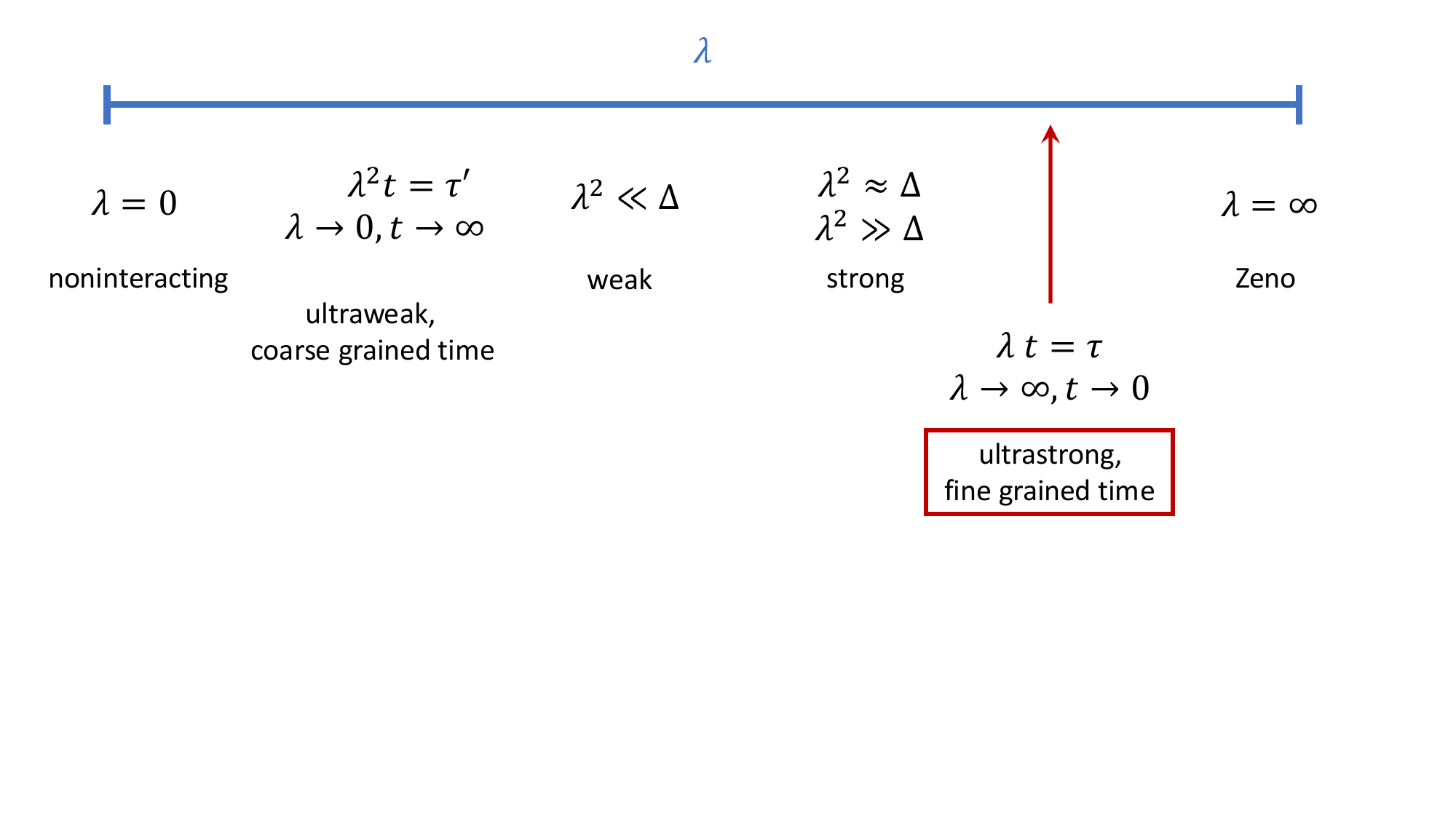}
\vspace*{-3.0cm}
\caption{Different coupling regimes: From noninteracting to Zeno. Here, $\lambda$ is the coupling parameter and $\Delta$ represents a typical system Bohr frequency, a hopping matrix element in the system Hamiltonian, or similar. The ultraweak and ultrastrong coupling regimes are {\it dynamical} regimes involving rescaled times. Here we introduce the fine grained time $\tau$ for the ultrastrong coupling regime and we derive the system dynamics as a function of $\tau$. We show that the instantaneous decoherence process known to happen in the Zeno limit (`$\lambda=\infty$') is resolved as a continuous dynamical process in the fine grained time. }
\label{Fig:Regime} 
\end{figure}

The case of strong coupling is characterized by $\lambda$ exceeding typical system parameters (Bohr frequencies, hopping terms). It has been investigated more recently. If the system Hamiltonian commutes with the interaction operator then the reduced dynamics is exactly solvable for arbitrary values of the coupling constant $\lambda$, by a so-called polaron transformation. This gives rise to the strategy of splitting the dynamics into an exactly solvable part for arbitrary $\lambda$ and times $t$, which is then perturbed by the non commuting part of the system Hamiltonian. Rigorous results in this regime have been obtained in \cite{M14,M16}; see also the recent analysis of \cite{Tru22,KhanAgar}. A different approach is the so-called reaction-coordinate mapping, that amounts to incorporating some degrees of freedom of the reservoir into the system and treating the remaining ones perturbatively as an effective bath \cite{Segal23a,Segal23b,Stras16}. Despite these advances, so far the methods are tailored to specific models and a general mathematical treatment is not available for the regime of strong coupling \cite{TMCA22}, in contrast to the ultraweak and weak coupling regimes. 

The {\it Zeno coupling} limit is $t>0$ fixed and $\lambda \to \infty$. It is shown in \cite{Marcantoni-Merkli} that the system undergoes the {\it Zeno effect} in this limit, namely, it experiences instantaneous nonselective measurement in the eigenbasis of the coupling operator, followed by a residual dynamics generated by a Zeno Hamiltonian. The effect of the nonselective measurement on the system density matrix, is to delete (set to zero) all non-diagonal density matrix elements while leaving the diagonal unchanged. This means that the system undergoes instantaneous decoherence in the given basis due to the Zeno, ``$\lambda=\infty$'' coupling limit. This causes a discontinuity in the system state at time $t=0_+$. Of course one expects that if $\lambda$ is large but finite, then the system decoherence due to the coupling to the reservoir, would proceed at a finite speed and in a continuous fashion. Therefore, in an attempt to temporally resolve the decoherence process, we investigate here the regime of large but finite $\lambda$, and small but nonzero times $t$, such that $\tau=\lambda t$ is an arbitrary but fixed number,
\begin{equation}
\label{regime0}
\lambda\rightarrow\infty,\quad t\rightarrow 0_+,\qquad \tau=\lambda t \in (0,\infty).
\end{equation}
We call \eqref{regime0} the ultrastrong coupling regime\footnote{In \cite{Marcantoni-Merkli}  ultrastrong coupling refers to the limit $\lambda \to \infty$ with fixed $t$. We propose to call this the Zeno regime while ultrastrong is reserved for \eqref{regime0}. This seems more natural by analogy with the ultraweak coupling regime.} (see also Figure \ref{Fig:Regime}) and $\tau$ is the {\it fine grained} time. It is akin to the coarse grained $\tau'=\lambda^2 t$ with $\lambda\rightarrow 0$ and $t\rightarrow\infty$ appearing in the ultraweak coupling theory.

We show in Section \ref{sec:finegraining} that in the ultrastrong coupling limit \eqref{regime0}, the system density matrix $\varrho_\s(\tau)$ varies continuously in the fine grained time $\tau$, and connects continuously to the initial system state for $\tau\rightarrow 0_+$ and to the fully decohered (measured) density matrix at $\tau\rightarrow\infty$. The decoherence process is determined by a decoherence function which depends on the properties of the reservoir initial state and the reservoir part of the interaction. For a typical $\s\r$ model with a bosonic reservoir (of the `spin-boson' type) in a thermal equilibrium state (at any temperature | or more generally for any Gaussian reservoir state), the decoherence function behaves as $e^{-\alpha \tau^2}$ for some $\alpha>0$. 

In Section \ref{sec:markov} we analyze the Markovianity properties of the dynamics $\varrho_\s(\tau)=\Lambda(\tau)\rho_\s(0)$. We show in particular that for a large group of models (including all Gaussian reservoir states) the dynamics is  Markovian, but it is not given by a semigroup. The introduction of the fine grained time scaling regime and the analysis of the resulting open system evolution, is the main novelty of this work. Here we deal with finite dimensional systems $\s$. The question of decoherence in continuous variable systems, in particular spatial decoherence, is of great interest in the context of open many-body systems. We study this in \cite{Mar-Mer-3}. 
\medskip

{\it Structure of the paper.}  In Section~\ref{sec:zenomeasuproc} we recall the setting and main result of \cite{Marcantoni-Merkli} on the Zeno dynamics in the limit of infinite coupling. In Section~\ref{sec:finegraining} we introduce the new scaling (fine graining of time) that allows to resolve the decoherence process. We derive the open system dynamics in this ultrastrong coupling regime for arbitrary coupling operators. In Section~\ref{sec:markov} we discuss the characterization of Markovianity for the mentioned dynamics, building on results already available in the literature for pure dephasing processes. We analyze the case of spin-boson models in Section~\ref{sec:spinboson}.

\section{Zeno effect for spin-boson systems as $\lambda \to \infty$}
\label{sec:zenomeasuproc}

In this section we present the model of an open system which was used in \cite{Marcantoni-Merkli} to derive the Zeno measurement effect on a system $\s$ caused by infinitely large coupling to an environment $\r$. The main result of \cite{Marcantoni-Merkli} is Theorem \ref{thm1} below.

A $d$-level quantum system is coupled to a reservoir with a continuum of modes labeled by $k\in\mathbb R^3$, with  Bosonic creation and annihilation operators $a^\dag(k),a(k)$, satisfying the canonical commutation relation $[a(k),a^\dag(l)]=\delta(k-l)$. The total Hamiltonian is given by
\begin{equation}
\label{1}
H=H_\s +H_\r +\lambda G\otimes \varphi(g),
\end{equation}
where $H_\s$ is the system Hamiltonian, that is a $d\times d$ hermitian matrix and 
\begin{equation}
\label{2}
H_\r = \int_{\mathbb R^3} \omega(k) a^\dag(k)a(k) d^3k
\end{equation}
is the reservoir Hamiltonian. In \eqref{2}, $\omega(k)\ge 0$ is the energy of the mode $k$ (`dispersion relation'). One may think of the photon dispersion relation $\omega(k) = |k|$, but this is not necessary for our analysis. The interaction term in \eqref{1} carries a {\it coupling constant} $\lambda\in\mathbb R$, a system coupling operator 
\begin{equation}
\label{3.0}
G = \sum_{l=1}^\nu \gamma_l P_l
\end{equation}
with distinct (possibly degenerate) eigenvalues $\gamma_j$ and spectral projections $P_l$ ($\dim P_l\ge 1$), and the field operator 
\begin{equation}
\label{3}
\varphi(g) = \frac{1}{\sqrt 2} \int_{\mathbb R^3} \big( g(k) a^\dag(k) + {\rm h.c.} \big)d^3k,
\end{equation}
where $g(k)$ is a complex valued function, called the form factor.

As is known (and discussed in \cite{Marcantoni-Merkli}), because $H_\r$  has absolutely continuous spectrum one cannot define the reservoir equilibrium state as a density matrix $\propto e^{-\beta H_\r}$ (this operator is not trace class, ${\rm tr} e^{-\beta H_\r}=\infty$). 
Rather, the construction of the continuous mode equilibrium state is done by taking a limit of discrete mode equilibrium states (the `thermodynamic limit'). It results in an expectation functional $\omega_{\r,\beta}$ for reservoir observables (built from functions of $a^\dag(k)$, $a(l)$), which can be expressed entirely by the {\it characteristic function}
\begin{equation}
\label{5}
\omega_{\r,\beta}(W(f)) =e^{-\frac14 \langle f,\coth(\beta\omega/2)f\rangle}.
\end{equation}
Here, $\langle f,h\rangle = \int_{\mathbb R^3} \overline{f(k)} g(k) d^3k$ is the inner product of $L^2({\mathbb R^3}, d^3k)$ and $W(f)$ is the unitary {\it Weyl operator},
\begin{equation}
\label{6}
W(f) =e^{i\varphi(f)}
\end{equation}
with $\varphi(f)$ as in \eqref{3}. The characteristic function \eqref{5} is also called the generating function, as it can be used to express the expectation for any observable by using the relation $\varphi(f) = -i\partial_\alpha|_{\alpha =0} W(\alpha f)$. One then finds that the two-point function of the reservoir equilibrium state is given by
\begin{equation}
\label{7}
\omega_{\r,\beta}\big(a^\dag(k)a(l)\big) = \frac{\delta(k-l)}{e^{\beta\omega(k)}-1}.
\end{equation}
This encodes Planck's law of black body radiation, where $n(k) = (e^{\beta\omega(k)}-1)^{-1}$ is the momentum density distribution in the reservoir. (That is, the number of modes per unit volume in a given region $\Lambda\in\mathbb R^3$ in momentum space is $\int_{\Lambda} n(k) d^3k$.) 

The state \eqref{5} is Gaussian and centered. Its covariance operator $\mathcal C$ acting on $L^2(\mathbb R^3,d^3k)$ is the operator of multiplication with the function $\coth(\beta\omega/2)$. We may consider more general Gaussian states of the form
\begin{equation}
\label{9}
\omega(W(f)) = e^{-\frac14 \langle f , \mathcal C f\rangle},
\end{equation}
where $\mathcal C$ is a general covariance operator on $L^2(\mathbb R^3,d^3k)$, satisfying
\begin{equation}
\label{C>1}
\mathcal C\ge \bbbone.
\end{equation}
The condition \eqref{C>1} is known to be necessary and sufficient for the right hand side of \eqref{9} to be the expectation functional of a quantum state; the case $\mathcal C=\bbbone$ is the field vacuum (zero temperature case).  Instead of the thermal distribution \eqref{7} we may consider reservoir states with an arbitrary energy distribution $\mu(k)\ge 0$, $\omega_\r\big(a^\dag(k)a(l)\big) = \mu(k) \delta(k-l)$, 
which corresponds to the covariance (compare with \eqref{5}) $\mathcal C  = C(k) = 1+2\mu(k)$. The corresponding state $\omega_\r$ \eqref{9} is stationary, $\omega_\r(e^{it H_\r}W(f)e^{-it H_\r}) = \omega_\r(W(e^{it\omega}f))=\omega_\r(W(f))$. Covariance operators $\mathcal C$ which are not multiplication operators by a function of $k$ result in non-stationary  Gaussian reservoir states, and they are included in the discussion and results. 
\medskip

We take initial system-reservoir states are of the form 
\begin{equation}
\label{10}
\omega_{\s\r} = \omega_\s\otimes\omega_\r,
\end{equation}  
where $\omega_\r$ is the Gaussian state \eqref{9} for a general covariance operator $\mathcal C\ge \bbbone$, and where \begin{equation}
\label{11}
\omega_\s (\cdot) = {\rm tr}_\s \big(\rho_\s \cdot \big)
\end{equation}
is a system state determined by a density matrix $\rho_\s$ of the $d$-level system with Hilbert space $\mathbb C^d$. Let $A\in\mathcal B(\mathbb C^d)$ be a system observable. The reduced system density matrix $\rho_\s(t)$ at time $t\ge 0$ in the Zeno limit is defined by the relation
\begin{equation}
{\rm tr}_\s\big( \rho_\s(t)  A\big) = \lim_{\lambda\rightarrow\infty} \omega_\s\otimes\omega_\r
\big(e^{i t H} (A\otimes\bbbone_\r) e^{-it H}\big),
\end{equation}
holding for all system observables $A\in\mathcal B(\mathbb C^d)$.

We make the following assumptions on the form factor $g(k)$ and the dispersion $\omega(k)$ in the Hamiltonian \eqref{1},
\begin{itemize}
\item[(A1)] We have the regularity property
\begin{equation}
\label{cond:g}
\mathcal C^{1/2}\frac{e^{i\omega(k)t}-1}{\omega(k)}g(k) \in L^2(\mathbb R^3,d^3k),\quad \forall t>0 \qquad \mbox{and}\qquad \frac{g(k)}{\omega(k)}  \in L^2(\mathbb R^3,d^3k).
\end{equation}

\item[(A2)] We have the effective coupling property
\begin{equation}
\label{gnotzero}
g(k)\neq 0\qquad \mbox{for $k\in\mathbb R^3$ satisfying $a<|k|< b$ for some $0\le a <b$.}
\end{equation}
\end{itemize}

The condition \eqref{cond:g} is used in the so-called  {\it polaron transformation} (see \cite{Marcantoni-Merkli}). However, if one assumes some regularity on the state \eqref{9} then the condition \eqref{cond:g} is not needed. In particular, it is not needed if the reservoir state is the thermal state $\omega_{\r,\beta}$, for any $0<\beta\le \infty$. We explain this further in \cite{Marcantoni-Merkli}. The non-vanishing condition \eqref{gnotzero} on $g$ is satisfied for instance if $g$ is continuous (and not the zero function) | it guarantees that the system is coupled in an effective way to the reservoir.

\begin{thm}[Zeno coupling, \cite{Marcantoni-Merkli}]
\label{thm1}
Assume conditions (A1) and (A2). Then for all $t > 0$, 
\begin{equation}
\label{15}
\rho_\s(t) = e^{-it H_\z} \left( \sum_{l=1}^\nu P_l  \rho_\s  P_l \right) e^{it H_\z},
\end{equation}
where $H_\z$ is the Zeno Hamiltonian (see \eqref{3.0})
\begin{equation}
\label{HZeno}
H_\z = \sum_{l=1}^\nu P_l H_\s P_l.
\end{equation}
\end{thm}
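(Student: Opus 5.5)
The plan is to treat the Zeno limit as a perturbation of the exactly solvable model in which $H_\s$ is replaced by something commuting with $G$. I would pass to the interaction picture with respect to the ``dominant'' part $H_0=H_\r+\lambda G\otimes\varphi(g)$. Since $G=\sum_l\gamma_l P_l$, we have $H_0=\sum_l P_l\otimes(H_\r+\lambda\gamma_l\varphi(g))$, and each block can be diagonalized by the polaron (Weyl) transformation
\[
H_\r+\lambda\gamma_l\varphi(g)=W(\lambda\gamma_l f)^*\bigl(H_\r-\tfrac{\lambda^2\gamma_l^2}{4}\|g/\sqrt{\omega}\|^2\bigr)W(\lambda\gamma_l f),
\]
up to the sign conventions of \eqref{6}, with $f\propto i g/\omega$. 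This is the step where (A1) enters, since $g/\omega\in L^2$ is required. The full dynamics is then $e^{-itH}=e^{-itH_0}\,\mathcal T e^{-i\int_0^t H_\s(s)ds}$, where $H_\s(s)=e^{isH_0}H_\s e^{-isH_0}$. Its block components are
\[
P_l H_\s P_m\otimes U_l(s)^*U_m(s),\qquad U_l(s)=e^{-is(H_\r+\lambda\gamma_l\varphi(g))}.
\]

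The key steps, in order, are as follows.

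\emph{Step 1.} Compute $U_l(s)^*U_m(s)$ explicitly via the Weyl relations. It equals a phase $e^{is\lambda^2(\gamma_l^2-\gamma_m^2)c}$ times a Weyl operator $W\bigl(\lambda(\gamma_l-\gamma_m)(e^{i\omega s}-1)g/\omega\bigr)$ conjugated by $H_\r$-evolution. For $l=m$ it is the identity, so the diagonal blocks reproduce exactly $H_\z$.

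\emph{Step 2.} Expand the time-ordered exponential in a Dyson series in $H_\s$. This series is norm convergent uniformly in $\lambda$, since $\|H_\s(s)\|=\|H_\s\|$. Then take the reservoir expectation $\omega_\r$ of products of Weyl operators using \eqref{9}.

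\emph{Step 3.} Show that every term containing at least one off-diagonal block ($l\neq m$) vanishes after time integration as $\lambda\to\infty$. For fixed $s>0$ the Gaussian expectation of such a product decays like
\[
\exp\Bigl(-\tfrac{\lambda^2}{4}(\gamma_l-\gamma_m)^2\bigl\langle h_s,\mathcal C h_s\bigr\rangle\Bigr),\qquad h_s=(e^{i\omega s}-1)g/\omega .
\]
Here $\langle h_s,\mathcal C h_s\rangle\ge\|h_s\|^2>0$ for $s>0$, by \eqref{C>1} and (A2): $g\ne0$ on a shell on which $e^{i\omega s}-1\ne0$ for a.e.\ $k$, at least if $\omega$ is nonconstant there. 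Dominated convergence over the simplex of integration times then kills these terms. The same mechanism applied to the initial conjugation $e^{itH_0}(A\otimes\bbbone)e^{-itH_0}$ produces decay of the off-diagonal parts $P_lAP_m$, $l\ne m$, and so yields the projection $\sum_lP_l\rho_\s P_l$.

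\emph{Step 4.} Resum the surviving diagonal terms into $e^{-itH_\z}$, which gives \eqref{15}.

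The main obstacle is Step 3, and it has two parts. The first is controlling mixed products: terms in which off-diagonal factors appear in chains such as $l\to m\to l$, together with the accompanying oscillating phases of size $\lambda^2$. Here one must check that the combined Weyl argument does not cancel, which requires positivity of the Gaussian exponent on a set of times of full measure. The second is making the interchange of $\lambda\to\infty$ with the Dyson sum rigorous. I would handle both by bounding each Weyl expectation in absolute value by $1$, which gives uniform summability, and then applying dominated convergence termwise, using that for almost every tuple of times some nonzero combination $\sum_j c_j h_{s_j}$ appears in the exponent. The justification of that nondegeneracy is where (A2) must be used carefully.
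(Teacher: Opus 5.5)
Your argument is correct. It is essentially the mechanism behind the cited result: the paper only quotes Theorem~\ref{thm1}, but it reuses the same ingredients in proving Theorem~\ref{thm3-n}. With $K=H_\r+\lambda G\otimes\varphi(g)$ (your $H_0$) block-diagonal in the $P_l$, identity \eqref{68} writes each block of $e^{isK}(B\otimes\bbbone_\r)e^{-isK}$ as an $s$-dependent phase times $W\bigl(\lambda(\gamma_l-\gamma_m)\frac{e^{i\omega s}-1}{i\omega}g\bigr)$, with no further $H_\r$-conjugation. The Dyson series in the bounded $H_\s$ converges uniformly in $\lambda$, every non-constant block chain is killed by $|\omega_\r(W(F))|\le e^{-\frac14\|F\|^2}$ plus dominated convergence, and the constant chains resum to $e^{\pm itH_\z}$.

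Your caveat on $\omega$ is unnecessary for the integrated times. For a.e.\ $k$ in the shell, $\sum_j c_j(e^{i\omega(k)s_j}-1)$ is a nonconstant analytic function of the $s_j$, so Fubini gives $F\neq0$ for a.e.\ time tuple. It is genuinely needed, however, for chains whose only jump sits at the fixed time $t$ carried by $A$. There one needs $(e^{i\omega t}-1)g\neq0$ in $L^2$, and (A2) alone does not ensure this: if $\omega\equiv\omega_0$ on ${\rm supp}\,g$ and $t\omega_0\in2\pi\mathbb Z$, the coherences revive and \eqref{15} fails. So this is an implicit hypothesis of the statement, satisfied e.g.\ for $\omega(k)=|k|$, rather than a gap in your proof.
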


The result shows that in the Zeno coupling limit ($t>0$ fixed, $\lambda\rightarrow\infty$), the system dynamics is that of a non-selective measurement with respect to the measurement observable $G$ (defining the interaction, see \eqref{1}) 
$$
\rho_\s\mapsto  \sum_{l=1}^\nu P_l  \rho_\s  P_l
$$
plus a Schr\"odinger dynamics generated by the Hamiltonian $H_\z$. The dynamics is happening independently on spectral subspaces ${\rm Ran} P_l$ of $G$ | the Hamiltonian $H_\z$ is block diagonal with respect to the decomposition 
$$
\h_\s=\bigoplus_{l=1}^\nu{\rm Ran}P_l.
$$
In the case when $G$ has simple spectrum, $\dim P_l=1$ for all $l$, the blocks of $H_\z$ are $1\times1$ and then \eqref{15} reduces to the time-independent state (Zeno effect)
$$
\rho_\s(t) =  \sum_{l=1}^\nu P_l  \rho_\s  P_l,\quad t>0.
$$
The right side is the state after a non-selective von Neumann projective measurement, associated to the measurement observable $G$, has been performed on $\rho_\s$. The Zeno coupling implements an {\it instantaneous} measurement on the system $\s$.  Unless the initial state $\rho_\s$ is already block diagonal, the measurement causes a discontinuity in the dynamics at $t=0$,
\begin{equation}
\label{n7}
\lim_{t\rightarrow 0_+} \rho_\s(t) \neq \rho_\s.
\end{equation}

Let us try to understand heuristically why the Zeno limit implements instantaneous decoherence. For this we consider a Hamiltonian of the $\s\r$-type \eqref{1}, given by
\begin{equation}
\label{1'}
H(\varepsilon,\lambda) =\varepsilon H_\s +H_\r +\lambda G\otimes X,
\end{equation}
where $\varepsilon$ is another parameter and $X$ is a reservoir operator. Let $\rho_\s(t)$ be the reduced system density matrix at time $t$. It depends on $\varepsilon$ and $\lambda$. Let $\{\psi_k\}$ be an eigenbasis of $G$. The populations of the density matrix $\rho_\s(t)$ in this basis are (by definition) the diagonal density matrix elements, $\langle \psi_k,\rho_\s(t) \psi_k\rangle$. The change of the populations for small times are given by 
\begin{equation}
\label{vareps}
\partial_t|_0 \, {\rm tr}_\s\big(\rho_\s (t) |\psi_k\rangle\langle\psi_k| \big) = -i \omega_{\s\r}\Big(\big[H(\ve,\lambda),|\psi_k\rangle\langle\psi_k|\otimes\bbbone_\r\big] \Big)
 = -i \ve \omega_{\s}\Big(\big[H_\s,|\psi_k\rangle\langle\psi_k|\big] \Big) \propto \ve,
\end{equation}
(we assume $\omega_{\s\r}=\omega_\s\otimes\omega_\r$) while change of the coherences, which (by definition) are the off-diagonal density matrix elements in the $G$-eigenbasis, are given for small times by
\begin{align}
\partial_t|_0 \, {\rm tr}_\s\big(\rho_\s (t) |\psi_k\rangle\langle\psi_l| \big)
&=  -i \omega_{\s\r}\Big(\big[H(\ve,\lambda),|\psi_k\rangle\langle\psi_l|\otimes\bbbone_\r\big] \Big)\nonumber\\
&= -i  \ve \omega_{\s}\Big(\big[ H_\s,|\psi_k\rangle\langle\psi_l|\big] \Big)-i  \lambda (\gamma_k-\gamma_l) \omega_{\s}\big(|\psi_k\rangle\langle\psi_l|\big)  \omega_\r(X)\nonumber\\
&= \ve C_{k,l} +\lambda C'_{k,l}.
\label{varepslambda}
\end{align}
It follows from \eqref{vareps}, \eqref{varepslambda} that for $\lambda>\!\!>\ve$ the time scale $\propto 1/\ve$ at which the populations change is much longer than the time scale $\propto 1/\lambda$ at which the coherences change. (If $C'_{k,l}=0$, say for $\omega_\r(X)=0$ then one looks at the second derivative at zero which is $\propto \lambda^2\omega_\r(X^2)$.) This is the heuristic reason why, in the setting of the Zeno coupling of Section \ref{sec:zenomeasuproc}, where $\ve=1$, $t>0$ fixed and $\lambda\rightarrow\infty$, the coherences of $\rho_\s(t)$ in the $G$ eigenbasis evolve very fast leading to the instantaneous projective measurement and successive Zeno dynamics while the populations are constant (see \eqref{15}).

\smallskip

In order to resolve the dynamics of the measurement process induced by the Zeno limit, we must look at finite but large coupling constants $\lambda$ and short times $t$. This leads us to consider the {\it ultrastrong coupling regime} \eqref{regime0}.

\section{Fine graining of time}
\label{sec:finegraining}

Consider a $\s\r$ bipartite open system with Hilbert space 
\begin{equation}
\label{hsr}
\h_{\s\r} = \h_\s\otimes\h_\r
\end{equation}
with
$$
\dim\h_\s=d<\infty
$$
and Hamiltonian 
\begin{equation}
\label{n1}
H(\lambda)= H_\s+H_\r+\lambda G\otimes X
\end{equation}
where $\lambda\in\mathbb R$, $H_\s$ and $H_\r$ are the Hamiltonians of the system and the reservoir, and where $G$ and $X$ are self-adjoint operators on the Hilbert spaces $\h_\s$ and $\h_\r$ of the system and the reservoir, respectively. Let 
\begin{equation}
\label{n2}
\omega_{\s\r}=\omega_\s\otimes\omega_\r\equiv {\rm tr}_\s(\rho_
\s \,\cdot\,)\otimes {\rm tr}_\r(\rho_\r\,\cdot\,)
\end{equation}
be the initial $\s\r$ state, where the $\rho_\s$  and $\rho_\r$ are density matrices of $\s$ and $\r$.\footnote{For the reservoir typically the state is obtained by a thermodynamic limit and so $\omega_\r$ should generally be viewed as a state in the algebraic sense, that is, a positive normalized functional on an algebra of observables. Then, by the Gelfand-Naimark-Segal construction, $\omega_\r$ can always be represented by a density matrix (of rank one even) on a suitable Hilbert space.} We are interested in the dynamics of the system, described by the density matrix
\begin{equation}
\label{m3}
\rho_\s(t) = {\rm tr}_\r\big( e^{-it H(\lambda)} (\rho_\s\otimes \rho_\r) e^{it H(\lambda)}\big)
\end{equation}
on $\h_\s$. In \eqref{m3} the trace is a partial one, taken over $\h_\r$. In terms of the fine grained time $\tau=\lambda t$, \eqref{regime0}, the propagator of the dynamics \eqref{m3} is 
\begin{equation}
\label{n9}
e^{-i t H(\lambda)} = e^{-i\tau \big(\frac{ H_\s+H_\r}{\lambda} +G\otimes X\big)}.
\end{equation}
We obtain formally,
\begin{equation}
\label{n10'}
\llim e^{-it H(\lambda)} =  e^{-i\tau G\otimes X}.
\end{equation}
For bounded operators $H_\r, X$ ($H_\s,G$ are automatically bounded as we assume $\dim\h_\s<\infty$) the equality \eqref{n10'} in the sense of operator norm convergence follows directly from ($H_0=H_\s+H_\r$)
$$
e^{-i\tau(\frac{H_0}{\lambda}+ G\otimes X)} -  e^{-i\tau G\otimes X} =  \frac{-i}{\lambda}  \int_0^\tau e^{-i s (\frac{H_0}{\lambda} +G\otimes X)} H_0 e^{i(s-\tau) G\otimes X} ds.
$$
However, we are usually interested in unbounded reservoir operators $H_\r$ and $X$, and the precise meaning of the limit \eqref{n10'} needs to be elucidated. We do this later in Section \ref{sec:spinboson} for spin-Boson type models. For now we examine some implications of \eqref{n10'}. From \eqref{m3} and \eqref{n10'} we obtain for $\tau>0$,
\begin{equation}
\label{n11}
\varrho_\s(\tau)\equiv \llim \rho_\s(t) ={\rm tr}_\r\big(e^{-i\tau G\otimes X} (\rho_\s\otimes \rho_\r) e^{i\tau G\otimes X}\big).
\end{equation}
Using the diagonal form \eqref{3.0} of $G$ gives
\begin{equation}
\label{n12}
\varrho_\s(\tau) =\sum_{l,r =1}^\nu  D_{l,r}(\tau)P_l \rho_\s P_r,\qquad \mbox{with}\qquad  D_{l,r}(\tau)\equiv {\rm tr}_\r \big(\rho_\r\, e^{-i\tau (\gamma_l-\gamma_r) X}\big).
\end{equation}
As $D_{l,r}(0)=1$ the continuity at time zero is restored,
\begin{equation}
\label{n13}
\lim_{\tau\rightarrow 0_+}  \varrho_\s(\tau)=\rho_\s. 
\end{equation}
Assume that $\lim_{|s|\rightarrow\infty} {\rm tr}_\r(\rho_\r\, e^{isX})=0$. Then \eqref{n12} shows that
\begin{equation}
\lim_{\tau\rightarrow\infty}\varrho_\s(\tau) = \sum_l P_l\rho_\s  P_l.
\end{equation}
In the $\tau\rightarrow\infty$ limit we thus recover the Zeno result (albeit with $H_\s=0$, as the effect of $H_\s$ is eliminated by taking $t\rightarrow 0$ in the current scaling, \eqref{n11}). The function  
\begin{equation}
\label{31'}
D(s) = \omega_\r(e^{-isX}),\quad s\in\mathbb R, 
\end{equation}
is of positive type, meaning that the matrix $D$ with elements $D_{ij}=D(s_i-s_j)$ is a positive semidefinite matrix for any $s_1,\ldots,s_N$ and any $N\ge 1$.\footnote{Let $N\ge 1$, denote the inner product of $\mathbb C^N$ by $\langle\cdot,\cdot\rangle$ and let $z=(z_1,\ldots,z_N)^T\in\mathbb C^N$. Then  $\langle z,  D z\rangle= \omega_\r( Y^\dag Y)\ge 0$, where $Y=\sum_{l=1}^N z_l e^{is_l X}$. So $D$ is positive semidefinite.} Moreover, $D(0)=1$. Suppose that $D$ is continuous. Then it follows from  Bochner's theorem that there exists a unique Borel probability measure $\mu$ on $\mathbb R$ such that
\begin{equation}
\label{bb31}
D(s) = \int_{\mathbb R} d\mu(\alpha) e^{-i\alpha s}.
\end{equation}
If $D(s)$ is a square integrable function, that is,
$$
\int_{\mathbb R} |D(s)|^2ds <\infty,
$$
then we have 
\begin{equation}
\label{mm39}
d\mu(\alpha) = \mu(\alpha) d\alpha,\qquad \mu(\alpha) = \frac{1}{2\pi}\int_{\mathbb R} e^{i\alpha s} D(s) ds. 
\end{equation}
This follows from \eqref{bb31} and the Fourier inversion theorem on $L^2(\mathbb R,ds)$ and the uniqueness of $d\mu(\alpha)$ for a given $D(s)$. We summarize these findings in the following result. 
\begin{thm}
\label{thm:2}
For all $\tau>0$, the state $\varrho_\s(\tau)$, \eqref{n11} has the expression 
\begin{align}
\varrho_\s(\tau) & = \sum_{l,r=1}^\nu D_{l,r}(\tau) P_l \rho_\s P_r\label{pop}\\
& = \int_{\mathbb R}  e^{i\tau\alpha G}\rho_\s \, e^{-i\tau\alpha G}\, d\mu(\alpha) \ \ \qquad \mbox{(if $D(s)$ is continuous)}\label{pop1} \\
&= \int_\rx \mu(\alpha)  e^{i\tau\alpha G}\rho_\s\, e^{-i\tau\alpha G}\, d\alpha \qquad \mbox{(if $D(s)$ is continuous and square integrable)}
\label{pop2}
\end{align}
where $D_{l,r}(\tau)= D(\tau(\gamma_l-
\gamma_r))$, $\mu$ the measure given by \eqref{bb31}, which satisfies \eqref{mm39} for square integrable $D$.
\end{thm}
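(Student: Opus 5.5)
The plan is to start from the defining expression \eqref{n11} and use the spectral decomposition \eqref{3.0} of $G$. Since $G=\sum_l\gamma_l P_l$ with orthogonal projections summing to $\bbbone$, functional calculus for the commuting pair $G\otimes\bbbone$ and $\bbbone\otimes X$ gives
$$
e^{-i\tau G\otimes X}=\sum_{l=1}^\nu P_l\otimes e^{-i\tau\gamma_l X}.
$$
This holds also for unbounded self-adjoint $X$: on each subspace ${\rm Ran}P_l\otimes\h_\r$ the operator $G\otimes X$ acts as $\gamma_l X$. Inserting this expansion on both sides of $\rho_\s\otimes\rho_\r$ gives
$$
e^{-i\tau G\otimes X}(\rho_\s\otimes\rho_\r)e^{i\tau G\otimes X}=\sum_{l,r}P_l\rho_\s P_r\otimes e^{-i\tau\gamma_lX}\rho_\r e^{i\tau\gamma_rX}.
$$
Taking the partial trace over $\h_\r$ and using cyclicity of the trace, the reservoir factor becomes ${\rm tr}_\r(\rho_\r e^{-i\tau(\gamma_l-\gamma_r)X})=D(\tau(\gamma_l-\gamma_r))$. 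This proves \eqref{pop}.

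For \eqref{pop1}, assume $D$ is continuous. Bochner's theorem, together with the positive-type property and $D(0)=1$ established above, gives the probability measure $\mu$ of \eqref{bb31}. Substituting \eqref{bb31} into \eqref{pop} gives
$$
\varrho_\s(\tau)=\int_\rx\sum_{l,r}e^{-i\alpha\tau(\gamma_l-\gamma_r)}P_l\rho_\s P_r\,d\mu(\alpha).
$$
The interchange of the finite sum and the integral is trivial: the integrand is bounded by $1$ in modulus and $\mu$ is a probability measure. The integrand factorizes as
$$
\Big(\sum_l e^{-i\alpha\tau\gamma_l}P_l\Big)\rho_\s\Big(\sum_r e^{i\alpha\tau\gamma_r}P_r\Big)=e^{-i\tau\alpha G}\rho_\s e^{i\tau\alpha G}.
$$
This matches \eqref{pop1} up to the sign convention $\alpha\mapsto-\alpha$. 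I will therefore be careful with the sign convention between \eqref{31'} and \eqref{bb31}. Either the measure $\mu$ is reflected, or one uses that $D_{l,r}$ with the roles of $l$ and $r$ interchanged is the complex conjugate; I will fix the convention so that the formula reads exactly as stated. This is where the only care is needed.

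For \eqref{pop2}, assume in addition that $D\in L^2(\rx)$. The measure $\mu$ is a finite measure whose Fourier transform $D$ is in $L^2$. By Plancherel, $D$ is the Fourier transform of an $L^2$ function $\tilde\mu$. Uniqueness of Fourier transforms of tempered distributions then identifies $d\mu=\tilde\mu\,d\alpha$, with $\tilde\mu$ given by \eqref{mm39}. Substituting this into \eqref{pop1} yields \eqref{pop2}.

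The main obstacle is this last identification. One has to justify that a finite measure whose Fourier transform lies in $L^2$ is absolutely continuous. I would argue it via tempered distributions: both $\mu$ and $\tilde\mu\,d\alpha$ have the same Fourier transform as distributions, hence they coincide as distributions. It follows that $\tilde\mu\ge0$ a.e. and that it is integrable.
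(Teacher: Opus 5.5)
Your argument follows the paper's route: expand $e^{-i\tau G\otimes X}=\sum_l P_l\otimes e^{-i\tau\gamma_l X}$ to get \eqref{pop}, insert the Bochner representation \eqref{bb31}, and for \eqref{pop2} show that $\mu$ is absolutely continuous via $L^2$ Fourier inversion plus uniqueness of the measure. Your tempered-distribution argument is a correct way to fill in that last step, which the paper only asserts. When $D\notin L^1$, \eqref{mm39} must then be read as a Plancherel $L^2$-limit.

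The sign mismatch you flagged is real, but you cannot make the formula ``read exactly as stated''. With $\mu$ defined by \eqref{bb31}, your computation gives
\[
\varrho_\s(\tau)=\int_{\mathbb R} e^{-i\tau\alpha G}\rho_\s\, e^{i\tau\alpha G}\, d\mu(\alpha).
\]
The opposite sign in \eqref{pop1}--\eqref{pop2} is a slip in the statement itself, not a convention you are free to choose.

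Your second proposed repair does not work. Since $\overline{D_{l,r}(\tau)}=D_{r,l}(\tau)$, the right side of \eqref{pop1} as written equals $\sum_{l,r}D_{r,l}(\tau)P_l\rho_\s P_r$. This agrees with \eqref{pop} for all $\tau$ and all $\rho_\s$ only if $D$ is real-valued, i.e.\ $\mu$ is symmetric. That holds for centered Gaussian (e.g.\ thermal) reservoir states. It fails for the coherent state of Example (iii), where $\mu$ is centered at $\mathrm{Im}\langle f,g\rangle$.

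So you should conclude in one of two ways. Either use the display above with $\mu$ from \eqref{bb31}. Or keep the stated integrand and replace $\mu$ by the reflected measure $\check\mu(B)=\mu(-B)$, whose density is $\frac{1}{2\pi}\int_{\mathbb R} e^{-i\alpha s}D(s)\,ds$.
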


According to \eqref{pop} the populations (in the eigenbasis of $G$) are time-independent; this can be understood as a consequence of the fact that in the considered limit $t\rightarrow 0$, the system Hamiltonian $H_\s$ does not play any role (see also \eqref{HH}). The second and third equality in \eqref{pop} show that $\varrho_\s(\tau)$ is a convex combination of density matrices $e^{i\tau\alpha G}\rho_\s e^{-i\tau\alpha G}$, each following a Markovian dynamics at a different speed given by the scaling of time, $\tau\alpha$ for varying $\alpha$. It is known that a convex combination of Markovian dynamics is generally non-Markovian \cite{WECC}. We now investigate the Markovianity properties of $\Lambda(\tau): \rho_\s \to \varrho_\s(\tau)$ more closely.

\section{Markovianity in the fine grained time regime}
\label{sec:markov}

\subsection{General Results}

Motivated by \eqref{pop} we consider the dynamics of density matrices $\rho$ of dimension $d \times d$ given by a dynamical map $\Lambda(\tau)$ of the form 
\begin{equation}\label{eq:maplambda}
\Lambda(\tau)\rho = \sum_{l,r=1}^\nu D_{l,r}(\tau) P_l \rho P_r,
\end{equation}
where $\tau\ge 0$ and the $P_l$ are a complete family of orthogonal projections (possibly of rank higher than one, in which case $\nu < d$), 
\begin{equation}
\label{pfamily}
P^\dag_l=P_l,\quad P_l P_r=\delta_{l,r} P_l, \quad \sum_{l=1}^\nu P_l =\bbbone.
\end{equation}
We assume that for each $\tau\ge 0$, $D_{l,r}(\tau)$ defines a positive definite matrix, that  $D_{l,r}(0)=1$ and that $D_{l,l}(\tau)=1$.
As we show below, these assumptions imply that  $\Lambda(\tau)$ is completely positive, that $\Lambda(0)=\bbbone$ and that $\Lambda(\tau)$ is trace preserving, respectively. 
In order to study the Markovianity of $\Lambda(\tau)$, we further assume that $D_{l,r}(\tau)\neq 0$ for all $l,r$ and $\tau\ge 0$, so that the inverse map $\Lambda(\tau)^{-1}$ exists\footnote{The characterization of Markovianity for non-invertible maps is more involved, see for instance \cite{noninvertible}.} for all $\tau\ge 0$ and reads explicitly
\begin{equation}\label{eq:inverse}
\Lambda(\tau)^{-1}\rho = \sum_{l,r =1}^\nu \big(D_{l,r}(\tau)\big)^{-1} P_l \rho P_r \,.
\end{equation}
An invertible map $\Lambda(\tau)$ is called CP-divisible if the intermediate propagator 
\begin{equation}
\label{n16}
V(\tau,\sigma) :=\Lambda(\tau)\Lambda(\sigma)^{-1} 
\end{equation}
is completely positive for all $\tau\ge \sigma \ge 0$. CP-divisibility is one of the definitions of quantum Markovianity \cite{divisibility}. If $V(\tau,\sigma)$ is positivity preserving then $\Lambda(\tau)$ is called P-divisible. Maps that are P-divisible but not CP-divisible are called weakly non-Markovian. If there is a $\tau\ge\sigma\ge 0$ such that $V(\tau,\sigma)$ is not positive, then the dynamics is called essentially non-Markovian.

Based on \eqref{eq:maplambda} and \eqref{eq:inverse} we have 
\begin{equation}
\label{n17}
   V(\tau,\sigma) \rho  = \sum_{l,r =1}^\nu Q_{l,r} (\tau,\sigma) \, P_l \rho P_r, \quad \text{where} \quad Q_{l,r} (\tau,\sigma) := \frac{D_{l,r}(\tau)}{D_{l,r}(\sigma)} ,
\end{equation}
so that the map $V(\tau,\sigma)$ has the same structure as $\Lambda(\tau)$ with a different coefficient matrix. For this class of dynamics and rank-one projections $P_l$, the Markovianity properties have been studied in \cite{Lonigro22} (Proposition 4.1). Our next result is a slight generalization to arbitrary rank.

\begin{prop}[Slight generalization of Lonigro-Chru\'{s}ci\'{n}ski \cite{Lonigro22}] \label{thm_markov}
Let $P_l$, $l=1,\ldots,\nu$ be a family of projections on $\cx^d$ satisfying \eqref{pfamily} and let $A=(A_{l,r})$ be a $\nu\times\nu$ complex matrix. Consider the map  $\Pi_A\rho = \sum_{l,r=1}^\nu A_{l,r}P_l \rho P_r$, where $\rho$ is an operator on $\cx^d$. The following are equivalent:
  \begin{itemize}
        \item[(a)] $\Pi_A$ is {\rm CP}
        \item[(b)] $\Pi_A$ is {\rm P}
        \item[(c)] $A$ is positive semi-definite 
    \end{itemize} 
\end{prop}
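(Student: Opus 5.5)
I will prove the cycle of implications (a) $\Rightarrow$ (b) $\Rightarrow$ (c) $\Rightarrow$ (a). The first implication is immediate, since every completely positive map is positive. The other two follow by reducing to rank-one projections and testing against suitable vectors.

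\emph{(c) $\Rightarrow$ (a).} Assume $A\ge 0$ and factor it as $A=B^\dag B$, that is, $A_{l,r}=\sum_{k}\overline{B_{k,l}}B_{k,r}$. I will then exhibit Kraus operators explicitly. Set $K_k=\sum_{l}\overline{B_{k,l}}P_l$, so that $K_k^\dag=\sum_r B_{k,r}P_r$. A direct computation gives
$$
\sum_k K_k\rho K_k^\dag=\sum_{l,r}\Big(\sum_k \overline{B_{k,l}}B_{k,r}\Big)P_l\rho P_r=\Pi_A\rho .
$$
Hence $\Pi_A$ is in Kraus form and is completely positive.

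\emph{(b) $\Rightarrow$ (c).} Assume $\Pi_A$ is positive. For each $l$ pick a unit vector $e_l\in{\rm Ran}P_l$; this is possible because the projections in the family are nonzero, and a zero $P_l$ would simply be discarded. Take an arbitrary $z\in\cx^\nu$ and form $\psi=\sum_l e_l$ and $\phi=\sum_l z_l e_l$. Applying $\Pi_A$ to the positive operator $\rho=|\psi\rangle\langle\psi|$ gives $\Pi_A\rho=\sum_{l,r}A_{l,r}|e_l\rangle\langle e_r|$. Positivity then yields
$$
0\le\langle\phi,\Pi_A\rho\,\phi\rangle=\sum_{l,r}\overline{z_l}A_{l,r}z_r ,
$$
so $A\ge0$.

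The only mildly delicate point is the bookkeeping of conjugates and indices, namely that the quadratic form comes out as $\langle z,Az\rangle$ rather than $\langle \bar z,A\bar z\rangle$. This is harmless because $z$ is arbitrary. Beyond that, the argument is the rank-one case of \cite{Lonigro22} transplanted onto the subspace spanned by the chosen $e_l$, together with the Kraus construction, which works for any rank because $P_lP_r=\delta_{l,r}P_l$ is never actually needed in it.
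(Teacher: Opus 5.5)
Your proof is correct and follows essentially the paper's route: (c)$\Rightarrow$(a) by exhibiting $\Pi_A$ in Kraus form (you make the paper's ``non-diagonal Kraus representation'' explicit via $A=B^\dag B$), and (b)$\Rightarrow$(c) by evaluating $\Pi_A$ on a rank-one positive operator built from vectors in the blocks ${\rm Ran}P_l$, so that the resulting expectation reduces to the quadratic form of $A$. The differences are cosmetic: you argue directly rather than by contrapositive, and you use one unit vector per block instead of weights spread uniformly over the block coordinates. Your handling of conjugates is also cleaner than the paper's, whose choice $y_j=v_k/|J_k|$ should read $y_j=\overline{v_k}/|J_k|$ to give $\langle y,P_kx\rangle=v_k$.
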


\noindent
{\it Proof of Proposition \ref{thm_markov}.} The proof follows closely \cite{Lonigro22}. To see that (c) $\Rightarrow$ (a) we note that for a positive definite $A$, the formula $\sum_{l,r=1}^\nu A_{l,r}P_l (\cdot) P_r$ is a (non-diagonal) Kraus representation for $\Pi_A$ and therefore $\Pi_A$ is CP. By definition, $(a) \Rightarrow (b)$. Finally, the implication $(b) \Rightarrow (c)$ is proved by showing the contrapositive. This is the only part where the rank of the projections plays a role.
We will show that if $A$ is not positive semi-definite, then there are vectors  $x=(x_1,\ldots,x_d)$, $y=(y_1,\ldots,y_d)\in\mathbb C^d$ such that $\langle x, \big(\Pi_A|y\rangle\langle y|\big) x\rangle <0$. We consider the canonical basis as the one in which the $P_l$ are diagonal. We have 
$$
\langle x, \big(\Pi_A|y\rangle\langle y|\big) x\rangle =\sum_{l,r=1}^\nu A_{l,r}\langle x,P_l y\rangle \langle y, P_rx\rangle = \langle v, A v\rangle,
$$
where $v=(v_1,\ldots,v_{\nu})\in \mathbb C^{\nu}$ is the vector with components $v_k=\langle y,P_k x\rangle$. (Note that the last inner product above is that of $\cx^\nu$ while other ones are that of $\cx^d$.) Suppose that $A$ is not positive definite. Then there is a (normalized) $v\in\mathbb C^{\nu}$ such that $\langle v, A v\rangle<0$. Choosing $x=(1,\ldots,1)$ we have $\langle y,P_kx\rangle = \sum_{j\in J_k} y_j$, where $J_k$ is the set of indices associated with the coordinates of the subspace ${\rm Ran}P_k$. Now we take $y_j=\frac{v_k}{|J_k|}$ for $j\in J_k$, and we do this for all $k$. Then  $\langle y, P_kx\rangle=v_k$ for all $k$. It follows that $\langle x, (\Pi_A|y\rangle\langle y|) x\rangle<0$, which shows that $\Pi_A|y\rangle\langle y|$ is not positive definite, hence $\Pi_A$ is not positivity preserving. This completes the proof of Proposition \ref{thm_markov}. \hfill $\qed$
\medskip

Applying the Proposition \ref{thm_markov} to $A=(Q_{l,r}(\tau,\sigma))$ given in \eqref{n17} we obtain the next result.

\begin{cor}
\label{cor_markov}
The following statements about the dynamics $\Lambda(\tau)$, \eqref{eq:maplambda} are equivalent:
  \begin{itemize}
        \item[(a)] $\Lambda(\tau)$ is CP-divisible
        \item[(b)] $\Lambda(\tau)$ is P-divisible
        \item[(c)] The $\nu\times \nu$ matrix $Q = \big( Q_{l,r} (\tau,\sigma) \big)_{l,r}$ defined in \eqref{n17} is positive semi-definite for every $\tau\ge\sigma\ge0$
    \end{itemize} 
A necessary condition for CP-divisibility is $|D_{l,r}(\tau)| \leq |D_{l,r}(\sigma)|$ for $\tau \geq \sigma \geq 0$. This is also sufficient for $d=2$.
\end{cor}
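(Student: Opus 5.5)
The equivalence of (a), (b), (c) follows from Proposition \ref{thm_markov} applied pointwise in $(\tau,\sigma)$. By \eqref{n17}, for each fixed $\tau\ge\sigma\ge0$ the intermediate propagator is $V(\tau,\sigma)=\Pi_A$ with $A=Q(\tau,\sigma)$. The proposition then says that $V(\tau,\sigma)$ is CP iff it is P iff $Q(\tau,\sigma)\ge 0$. CP-divisibility (resp.\ P-divisibility) means that $V(\tau,\sigma)$ is CP (resp.\ P) for every $\tau\ge\sigma\ge0$. Quantifying over all such pairs therefore gives (a)$\Leftrightarrow$(b)$\Leftrightarrow$(c) directly. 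Invertibility of $\Lambda(\sigma)$ is guaranteed by the standing assumption $D_{l,r}(\sigma)\neq0$, so $Q$ is well defined.

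For the necessary condition, I would use the fact that a positive semidefinite matrix has all principal $2\times2$ minors nonnegative. First note the diagonal entries: $Q_{l,l}=D_{l,l}(\tau)/D_{l,l}(\sigma)=1$. Next, Hermiticity of $D(\tau)$ (it is positive definite) gives $D_{r,l}=\overline{D_{l,r}}$, hence $Q_{r,l}=\overline{Q_{l,r}}$. The $2\times 2$ principal minor on indices $\{l,r\}$ is then $1-|Q_{l,r}|^2\ge0$. This yields $|D_{l,r}(\tau)|\le|D_{l,r}(\sigma)|$ for all $\tau\ge\sigma\ge 0$.

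For sufficiency when $d=2$, I would split by $\nu$. If $\nu=1$, the map is the identity and there is nothing to prove. If $\nu=2$, then $Q$ is a $2\times2$ Hermitian matrix with unit diagonal and off-diagonal entries $Q_{1,2}$ and $\overline{Q_{1,2}}$. Such a matrix is positive semidefinite iff its trace ($=2$) and determinant $1-|Q_{1,2}|^2$ are nonnegative. This holds exactly when $|D_{1,2}(\tau)|\le|D_{1,2}(\sigma)|$, and $|D_{2,1}|=|D_{1,2}|$. The only care points are checking Hermiticity and unit diagonal of $Q$. There is no real obstacle; the argument is routine bookkeeping once Proposition \ref{thm_markov} is available.
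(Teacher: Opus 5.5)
Your proposal is correct and follows essentially the same route as the paper. Both apply Proposition~\ref{thm_markov} to $V(\tau,\sigma)=\Pi_{Q(\tau,\sigma)}$ for the equivalences. Both get monotonicity from the nonnegative $2\times 2$ principal minors $1-|Q_{l,r}|^2$, using the unit diagonal and Hermiticity of $Q$. Both use the trace–determinant criterion for sufficiency when $d=2$. Your separate treatment of the trivial case $\nu=1$ is a small point of care that the paper leaves implicit.
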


{\it Proof of Corollary \ref{cor_markov}.} The equivalences follow directly from Proposition \ref{thm_markov} applied to the map $\Pi_{Q(\tau,\sigma)}=V(\tau,\sigma)$. Moreover, the positive semi-definiteness of $Q$ implies that all principal minors are nonnegative (see e.g.~the equivalent characterizations of positive semi-definite matrices in \cite{book_matrix} p.~566). In particular, all $2 \times 2$ principal minors are nonnegative, i.e. for any pair of indices $i_1,i_2$ we have
\begin{equation}\label{eq:2Ddeterminant}
    Q_{i_1,i_1}(\tau,\sigma) Q_{i_2,i_2}(\tau,\sigma) - Q_{i_1,i_2}(\tau,\sigma) Q_{i_2,i_1}(\tau,\sigma) \geq 0.
\end{equation}
Using \eqref{n17}, the hermiticity $D_{l,r}(\tau)=\overline{D_{r,l}(
\tau)}$ and $D_{r,r}(\tau)=1$, we obtain from \eqref{eq:2Ddeterminant} that $|Q_{l,r}(\tau,\sigma)|^2\le 1$, which implies $|D_{l,r}(\tau)| \leq |D_{l,r}(\sigma)|$ for $\tau \geq \sigma \geq 0$. Finally, for $d=2$, $Q$ is a $2 \times 2$ matrix with trace equal to $2$. Therefore, it is positive semi-definite whenever the determinant is nonngegative, which in turn is guaranteed by the monotonicity of $|D_{l,r}(\tau)|$, as we just discussed. Corollary \ref{cor_markov} is proven. \hfill $\qed$
\bigskip

{\bf Counterexample:~Monotonicity without CP-divisibility.} We now show that already for $d=3$ the monotonicity condition in Corollary \ref{cor_markov} is not sufficient for CP-divisibility. 
The matrix 
\begin{equation}
\label{D}
    D = 
    \begin{pmatrix}
      1 & a & b \\
      a & 1 & a \\
      b & a & 1
    \end{pmatrix},\qquad a,b\ge 0,
\end{equation}
has determinant $\det D=(1-b)(1+b-2a^2)$. Let $D(\tau)$ be the matrix $D$ with $a(\tau)=e^{-\tau/4}$ and $b(\tau)=(1+\tau^3/2)^{-1}$. One easily sees that $D(\tau)\ge 0$ for all $\tau\ge 0$ (check for instance that all the principal minors are non-negative) and furthermore, the matrix elements $D_{l,r}(\tau)$ satisfy $|D_{l,r}(\tau)| \leq |D_{l,r}(\sigma)|$ for $\tau \geq \sigma \geq 0$. However, there are $\tau\ge\sigma\ge0$ such that the matrix $Q(\tau,\sigma)$ with the elements $Q_{l,r} (\tau,\sigma) = D_{l,r}(\tau)/D_{l,r}(\sigma)$ is not positive semi-definite. Indeed, $Q(\tau,\sigma)$ is again of the form \eqref{D} above, with matrix elements $a(\tau,\sigma)=e^{-(\tau-\sigma)/4}$ and $b(\tau,\sigma)=(2+\sigma^3)/(2+\tau^3)$. The $2\times2$ principal minors of $Q(\tau,\sigma)$ are always non-negative, therefore $Q(\tau,\sigma)$ is positive semi-definite if and only if $\det Q(\tau,\sigma) \geq 0$. Now 
\begin{equation*}
    \det Q(\tau,\sigma) = \left( 1-\frac{2+\sigma^3}{2+\tau^3} \right)\left( 1+ \frac{2+\sigma^3}{2+\tau^3} - 2 e^{-(\tau-\sigma)/2}\right)
\end{equation*}
and the first factor is strictly positive for all $\tau> \sigma$. For $\tau=2$ and $\sigma=3/2$ the second factor is $<-1/50$. It follows that $Q(2,3/2)$ is not positive semi-definite. 
\medskip

{\bf Remark.} The counterexample we provided has a different qualitative time-dependence for the different matrix elements of $D(\tau)$. However, dynamical maps that are obtained in the fine-grained-time regime from the interaction with a reservoir correspond to $D_{l,r}(\tau)= \omega_\r(e^{-i \tau (\gamma_l-\gamma_r) X})$, so that the different matrix elements share the same functional form of the time-dependence. Finding a counterexample within this class seems more difficult. In particular, for a natural ansatz like $D_{l,r}(\tau)= e^{- a_{lr} \tau^\alpha}$, with non-negative coefficients $a_{lr}$ and some exponent $\alpha >0$, one has that $D(\tau)\geq 0 \Rightarrow Q(\tau,\sigma) \geq 0$ for $\tau \geq \sigma \geq 0$. Indeed, $Q(\tau,\sigma)= D(\sqrt[\alpha]{\tau^\alpha -\sigma^\alpha})$.

\bigskip

\subsection{Markovianity properties via the generator}

An alternative way of studying Markovianity is to investigate the master equation, that is the dynamical equation for the time-dependent density matrix $\Lambda(\tau)\rho$, 
\begin{equation}
\label{eq:mastereq}
    \partial_\tau \Lambda(\tau)\rho = \mathcal L_\tau \Lambda(\tau)\rho ,
\end{equation}
where the time-dependent generator is defined by 
\begin{equation}\label{generator}
    \mathcal{L}_\tau (\cdot) := \big(\partial_\tau \Lambda(\tau) \big) \Lambda(\tau)^{-1} (\cdot),
\end{equation}
and we of course assume that $\Lambda(\tau)$ is differentiable in $\tau$ and invertible as before. By construction, as $\Lambda(\tau)$ is trace-preserving and hermiticity-preserving (that is, ${\rm tr}(\Lambda(\tau)A)= {\rm tr}(A)$ and $\Lambda(\tau)(A^\dag) =\big( \Lambda(\tau) (A)\big)^\dag$), the generator $\mathcal{L}_\tau$ satisfies 
\begin{equation}\label{hptp}
   (i) \  {\rm tr}\big(\mathcal{L}_\tau(A) \big) =0, \qquad \text{and} \qquad (ii) \  \mathcal{L}_\tau (A^\dag) =\big( \mathcal{L}_\tau (A)\big)^\dag.
\end{equation}
When talking about generators in the following, we always consider linear maps having the properties (i) and (ii) in \eqref{hptp}. There are many equivalent ways of expressing a given generator. Some of those are called canonical \cite{review}.
\begin{definition} 
A {\rm canonical form} of a (possibly time-dependent) generator $\mathcal{L}_\tau: M_d(\mathbb{C}) \to M_d(\mathbb{C})$ is 
\begin{equation}\label{canonical}
    \mathcal{L}_\tau (\cdot)= -i[H(\tau), \cdot] + \sum_{l,r=1}^{d^2-1} \widehat{C}_{lr}(\tau) \Big( F_l(\tau) (\cdot) F_r^\dag(\tau) - \frac{1}{2} \big\{F_r^\dag(\tau) F_l(\tau), \cdot \big\} \Big) ,
\end{equation}
where for each $\tau\ge 0$, $H(\tau)$ is a hermitian $d\times d$ matrix, $\widehat{C}(\tau):=(\widehat{C}_{lr}(\tau))_{lr}$ is a hermitian $(d^2-1) \times (d^2-1)$ matrix, and the set $\{ F_i (\tau) \}_{i=1}^{d^2-1} \cup \{ \bbbone /\sqrt{d} \}$ is a orthonormal basis of $M_d(\mathbb{C})$. Equivalently, the operators $F_i(\tau)$ are orthonormal and traceless. 
\end{definition}
The orthonormality is with respect to the Hilbert-Schmidt inner product, that is, here and in the following we consider the bounded operators $M_d(\cx)$ on $\cx^d$ with inner product $\langle X,Y\rangle={\rm tr}(X^\dag Y)$ and induced norm $\|X\|^2=\tr(X^\dag X)$.
It is readily checked that the form \eqref{hptp} implies (i) and (ii) of \eqref{hptp} | in particular, the hermiticity of $\widehat{C}(\tau)$ implies the hermiticity-preservation (ii). The canonical form of a given generator is not unique. All canonical forms are related by a change of basis that keeps one element fixed, the multiple of the identity $\bbbone /\sqrt{d}$. Therefore, the coefficient matrix $\widehat{C}(\tau)$ |  called Kossakowski matrix |  depends on the specific canonical form considered. However, some properties, such as hermiticity or semi-definiteness of the Kossakowski matrix, are representation independent.

\begin{definition}[\cite{review}]
A (possibly time-dependent) generator $\mathcal{L}_\tau: M_d(\mathbb{C}) \to M_d(\mathbb{C})$ is in {\rm diagonal canonical} form if it is written as \eqref{canonical} with a diagonal matrix $\widehat{C}(\tau)$. Equivalently, $\forall \tau\ge0$
\begin{equation}\label{canonicaldiag}
    \mathcal{L}_\tau (\cdot)= -i[H(\tau), \cdot] + \sum_{i=1}^{d^2-1} \Gamma_i(\tau) \Big( \widehat{F}_i(\tau) (\cdot) \widehat{F}_i^\dag(\tau) - \frac{1}{2} \big\{\widehat{F}_i^\dag(\tau) \widehat{F}_i(\tau), \cdot \big\} \Big) 
\end{equation}
for real numbers $\{ \Gamma_i(\tau) \}_{i=1}^{d^2-1}$ and orthonormal traceless operators $\{ \widehat{F}_i (\tau) \}_{i=1}^{d^2-1}$.
\end{definition}

A generator given in canonical form can always be put in diagonal canonical form:  As $\widehat{C}(\tau)$ is hermitian we have for any fixed $\tau$,  $\widehat{C}(\tau)= U^\dag(\tau) \Gamma(\tau) U(\tau)$ for some unitary matrix $U(\tau)$ and diagonal matrix $\Gamma(\tau)$ with real entries $\Gamma_{ij}(\tau)=\Gamma_i (\tau)\delta_{ij}$. Defining $\widehat{F}_i(\tau):= \sum_{l} U_{li} F_l(\tau)$ one passes from \eqref{canonical} to \eqref{canonicaldiag}. Coming back to our original motivation of studying Markovianity, a well-known result connecting the structure of the generator with Markovianity is the following.
\begin{thm}[Corollary 7 in \cite{review}]\label{thm_generator}
   A time-dependent generator $\mathcal{L}_\tau$ written in diagonal canonical form \eqref{canonicaldiag} defines a CP-divisible dynamics if and only if all the coefficients $\Gamma_i(\tau)$ are non-negative for all times $\tau$.
\end{thm}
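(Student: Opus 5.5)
The theorem concerns a time-dependent generator $\mathcal L_\tau$ in diagonal canonical form. The natural route goes through the infinitesimal form of divisibility: the intermediate propagator $V(\tau,\sigma)$, which satisfies $\partial_\tau V(\tau,\sigma)=\mathcal L_\tau V(\tau,\sigma)$ with $V(\sigma,\sigma)=\bbbone$, is a time-ordered exponential of $\mathcal L_\tau$. We then use the Gorini–Kossakowski–Sudarshan–Lindblad characterization: a time-independent generator $\mathcal L$ with properties (i),(ii) of \eqref{hptp} generates a CP semigroup $e^{s\mathcal L}$ for all $s\ge0$ if and only if its Kossakowski matrix in some (equivalently any) canonical form is positive semidefinite. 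In diagonal canonical form this means $\Gamma_i\ge 0$. The plan is to reduce both directions to this statement at a frozen time $\tau$.

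\emph{Sufficiency.} Assume $\Gamma_i(\tau)\ge0$ for all $\tau$. For each fixed $\tau$, $\mathcal L_\tau$ is then a GKSL generator, so $e^{\epsilon\mathcal L_\tau}$ is CPTP for $\epsilon\ge 0$. Partition $[\sigma,\tau]$ into steps of size $\epsilon$ and approximate
\[
V(\tau,\sigma)=\lim_{N\to\infty}e^{\epsilon\mathcal L_{\tau_N}}\cdots e^{\epsilon\mathcal L_{\tau_1}}.
\]
This product formula for the time-ordered exponential holds under continuity of $\tau\mapsto\mathcal L_\tau$ in the finite-dimensional space of superoperators, with an error of order $\epsilon$ per step from a Duhamel estimate. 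Each factor is CP, products of CP maps are CP, and the set of CP maps is closed. Hence $V(\tau,\sigma)$ is CP.

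\emph{Necessity.} Assume $V(\tau,\sigma)$ is CP for all $\tau\ge\sigma$. Then $V(\tau+\epsilon,\tau)=\bbbone+\epsilon\mathcal L_\tau+o(\epsilon)$. We use the Choi matrix on $\cx^d\otimes\cx^d$. Let $P^\perp$ be the projection onto the orthogonal complement of the maximally entangled vector $\Omega$. The Choi matrix of $\bbbone$ is $|\Omega\rangle\langle\Omega|$, which vanishes under compression by $P^\perp$. Positivity of the Choi matrix of $V(\tau+\epsilon,\tau)$, divided by $\epsilon$ and with $\epsilon\to0_+$, therefore gives $P^\perp\,\mathrm{Choi}(\mathcal L_\tau)\,P^\perp\ge0$. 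This is the conditional complete positivity of $\mathcal L_\tau$.

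It remains to compute this compression explicitly. Write $\mathcal L_\tau$ in canonical form \eqref{canonical}. The Hamiltonian and anticommutator terms have the form $K(\cdot)+(\cdot)K^\dag$, and their Choi contributions are killed by the compression onto $\Omega^\perp$. The remaining term $\sum\widehat C_{lr}F_l(\cdot)F_r^\dag$ has Choi matrix $\sum_{l,r}\widehat C_{lr}|f_l\rangle\langle f_r|$, where $f_l=(F_l\otimes\bbbone)\Omega$. Since the $F_l$ are traceless, $f_l\perp\Omega$, and orthonormality in Hilbert–Schmidt gives orthonormal $f_l$ (up to a factor $1/d$). So the compression is unitarily equivalent to $\widehat C(\tau)$, and positivity of the compression gives $\widehat C(\tau)\ge0$. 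In diagonal form this is $\Gamma_i(\tau)\ge0$. The same computation, run backwards, also proves the frozen-time GKSL step used in the sufficiency direction.

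The main obstacle is regularity. The argument needs $\mathcal L_\tau$ continuous, or at least locally integrable, for the product formula, and we need $\Gamma_i(\tau)$ to be well defined pointwise in the necessity direction. I would assume continuity of $\tau\mapsto\mathcal L_\tau$, as implied by the differentiability of $\Lambda(\tau)$ assumed above, and state the result for all $\tau$ where the derivative exists.
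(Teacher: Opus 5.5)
Your proof is correct. The paper gives no proof of this statement; it imports it from the cited review. Your argument is the standard one behind that result: sufficiency by writing $V(\tau,\sigma)$ as a limit of time-ordered products of frozen-time GKSL semigroups, and necessity by expanding $V(\tau+\epsilon,\tau)=\bbbone+\epsilon\mathcal L_\tau+o(\epsilon)$ and identifying the $\Omega^\perp$-compressed Choi matrix of $\mathcal L_\tau$ with the Kossakowski matrix.

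Three small points need fixing:
\begin{itemize}
\item The per-step error in the product formula must be $o(\epsilon)$, not $O(\epsilon)$: it is $\epsilon$ times the modulus of continuity of $\mathcal L$ on $[\sigma,\tau]$, plus $O(\epsilon^2)$. An $O(\epsilon)$ error per step would not vanish after $N\sim 1/\epsilon$ steps.
\item Continuity of $\tau\mapsto\mathcal L_\tau$ does not follow from differentiability of $\Lambda$ alone; it is equivalent to $\Lambda$ being $C^1$. It should therefore be stated as an explicit hypothesis.
\item Running the Choi computation backwards only shows that $\mathcal L_\tau$ is conditionally completely positive. Complete positivity of $e^{s\mathcal L_\tau}$ still needs the exponentiation step: split $\mathcal L_\tau=\Phi+K(\cdot)+(\cdot)K^\dagger$ with $\Phi$ CP and use a Dyson or Trotter expansion. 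This is fine, since you cite GKSL for exactly that step.
\end{itemize}
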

According to Theorem \ref{thm_generator}, a strategy to characterize Markovianity is to write the generator in a canonical form \eqref{canonical}, diagonalize the matrix $\widehat{C}(\tau)$ in order to arrive at a diagonal canonical form and check the sign of the eigenvalues $\Gamma_i(\tau)$. In particular, the positive-definiteness of the matrix $\widehat{C}(\tau)$ will not depend on the specific canonical form considered. 

Let us now focus the discussion on dynamics of the form \eqref{eq:maplambda}, like those emerging under strong coupling in the fine grained time regime. Explicitly, from the definition \eqref{generator} one can compute
\begin{equation}\label{P_generator}
    \mathcal{L}_\tau (\cdot) =\sum_{l,r=1}^\nu C_{lr}(\tau) P_l (\cdot) P_r ,\qquad \text{where} \qquad C_{lr}(\tau) = \frac{\partial_\tau D_{l,r}(\tau)}{D_{l,r}(\tau)}.
\end{equation} 
Since $\partial_\tau D_{l,l}(\tau)=0$ for all $l,\tau$, the sum in \eqref{generator} is effectively only over $l\neq r$. Also, for $l \neq r$ one has $P_l P_r=0$ because these are orthogonal projections. Therefore, \eqref{P_generator} can be written equivalently
\begin{equation}\label{P2_generator}
    \mathcal{L}_\tau (\cdot)= \sum_{l,r=1}^\nu C_{lr}(\tau) \Big( P_l (\cdot) P_r - \frac{1}{2} \big\{P_r P_l, \cdot \big\} \Big) ,
\end{equation}
that at a first look resembles a canonical form. However, the operators $P_r$, despite being orthogonal, are not normalized, $\langle P_r,P_r\rangle:={\rm tr} P^\dag_r P_r\neq 1$ (unless the $P_r$ are rank-one) and moreover they are not traceless. Therefore, this is not a canonical form.

To write the generator in a canonical form, we use a Gram-Schmidt orthogonalization procedure. The set of $\nu$ operators $\{\bbbone, P_1,\ldots,P_{\nu-1}\}$ is linearly independent because the $P_j$ are spectral projections \eqref{3.0} satisfying $\sum_{j=1}^\nu P_j=\bbbone$. By Gram-Schmidt we construct the orthonormal set  $\{ F_i\}_{i=0}^{\nu-1}$ as
\begin{equation}
F_0 = \frac{\bbbone}{\sqrt{d}}, \qquad F_i=\frac{\widetilde F_i}{\|\widetilde F_i\|} \qquad \text{with}\qquad \widetilde{F}_i = P_i - \sum_{j=0}^{i-1}   \frac{{\rm tr}(P_i \widetilde{F}_j)}{\| \widetilde{F}_j \|^2}\widetilde{F}_j, \quad  1\leq i \leq \nu-1. 
\end{equation}
This gives
\begin{equation}
    P_i = \|\widetilde F_i\| F_i + \sum_{j=0}^{i-1}  F_j\, {\rm tr}(P_i F_j), \quad i=1,\ldots,\nu-1,\qquad P_\nu=\bbbone-\sum_{j=1}^{\nu-1}P_j=\sqrt d F_0-\sum_{j=1}^{\nu-1}P_j.
\end{equation}
Substituting these expressions for $P_l, P_r$ into \eqref{P2_generator} we obtain
\begin{align}\label{finalform}
    \mathcal{L}_\tau (\cdot) &= \sum_{l,r = 0}^{\nu-1} \widehat{C}^0_{lr} (\tau) \Big( F_l (\cdot) F_r - \frac{1}{2} \big\{F_r F_l, \cdot \big\} \Big) \nonumber \\
    &= \sum_{l,r = 1}^{\nu-1} \widehat{C}^0_{lr}(\tau)  \Big( F_l (\cdot) F_r - \frac{1}{2} \big\{F_r F_l, \cdot \big\} \Big) + i \sum_{r=1}^{\nu-1} \frac{{\rm Im} \, \widehat{C}^0_{r0}(\tau)}{\sqrt{d}} \big[ F_r, \cdot \,\big]
\end{align}
where the explicit expression for $\widehat{C}^0_{lr}(
\tau)$ can be obtained in principle from ${C}_{lr}(\tau)$ by exploiting the recursive definition of the $F_i$. The last equality follows from isolating the terms with $l=0$, $r=0$: the term with $l=r=0$ vanishes; for  $l=0$ and $r\neq 0$ one has for any matrix $A$
\begin{equation*}
    \sum_{r=1}^{\nu-1} \frac{\widehat{C}^0_{0r}(\tau)}{\sqrt{d}} \Big(  A F_r - \frac{F_r A}{2} - \frac{A F_r}{2} \Big) = - \sum_{r=1}^{\nu-1} \frac{\widehat{C}^0_{0r}(\tau)}{\sqrt{d}} \frac{\big[ F_r , A\big] }{2}. 
\end{equation*}
The terms with $l\neq 0$ and $r=0$ give the same expression as on the right side, with $\widehat C^0_{0r}(\tau)$ replaced by $-\widehat C^0_{r0}(\tau)$. Exploiting the hermiticity\footnote{The hermiticity of $\widehat{C}^0(\tau)$ is guaranteed by the property \eqref{hptp}-(ii), see e.g. \cite{review}.} of $\widehat{C}^0(\tau)$ one arrives at \eqref{finalform}.

Note that the $(\nu-1)\times (\nu-1)$ matrix $\widehat{C}(\tau)$, that is obtained from $\widehat{C}^0(\tau)$ eliminating the elements with $l=0$ or $r=0$, is also hermitian.
The generator \eqref{finalform} is in canonical form\footnote{The orthonormal set $\{ F_i\}_{i=0}^{\nu-1}$ is not a basis because the dimension of $M_{d}(\mathbb{C})$ is $d^2>\nu$. However, it becomes a basis once completed with arbitrary orthonormal elements $\{ F_i\}_{i=\nu}^{d^2-1}$ that do not play a role in the considered dynamics. Therefore, the $(d^2-1)\times (d^2-1)$ Kossakowski matrix has a block-diagonal structure 
$\begin{pmatrix}
  [\widehat{C}]_{(\nu-1) \times (\nu-1)} & [0]_{(\nu-1) \times (d^2-\nu) } \\
  [0]_{ (d^2-\nu) \times (\nu-1)  } & [0]_{ (d^2-\nu) \times (d^2-\nu) }
\end{pmatrix} $ and it is positive semi-definite if and only if $\widehat{C}$ is. } and therefore, according to Theorem \ref{thm_generator}, the positivity of $\widehat{C}(\tau)$ for any $\tau$ gives a further explicit criterion to determine the Markovianity of the dynamics \eqref{eq:maplambda}. 
\medskip

{\bf Remark.} We can normalize the projections  in \eqref{P_generator}, \eqref{P2_generator} and introduce the orthonormal family $\widetilde{P}_l := P_l/ \sqrt{{\rm tr}(P_l)}$  to arrive at 
\begin{equation}
\label{naiveform}
    \mathcal{L}_\tau (\cdot) 
    = \sum_{l,r=1}^\nu \widetilde{C}_{lr}(\tau) \Big( \widetilde{P}_l (\cdot) \widetilde{P}_r - \frac{1}{2} \big\{ \widetilde{P}_r \widetilde{P}_l, \cdot \big\}  \Big) , 
\end{equation}
with $\widetilde{C}_{lr}(\tau):= \sqrt{{\rm tr}(P_l){\rm tr}(P_r)}\, C_{lr}(\tau)$. Still, this is not a canonical form of the generator, because the $\widetilde P_j$ are not traceless. The matrix $\widetilde{C}(\tau)$ is hermitian and traceless (all diagonal elements are vanishing) and  so unless $\widetilde{C}(\tau)$ is the zero matrix, it cannot be positive semi-definite. A naive (and wrong!) application of Theorem \ref{thm_generator} would lead  us to conclude that the dynamics is not CP-divisible. This is not true however, as we show in the following example.
\medskip

{\bf Example.}  Consider the case of a free Bosonic reservoir as in Section \ref{sec:zenomeasuproc} in a centered Gaussian state $\omega_\r$, with interaction operator (see \eqref{n1}) $X= \varphi(g)$. The decoherence function is
\begin{equation*}
    D_{l,r}(\tau) := \omega_\r \Big( W \big(\tau (\gamma_r - \gamma_l)g \big) \Big) = e^{-\frac{\tau^2}{4}(\gamma_l -\gamma_r)^2 \langle g, C g \rangle}
\end{equation*}
and the elements of the matrix $Q(\tau,\sigma)$ are 
\begin{equation}
    Q_{l,r} (\tau,\sigma) = e^{-\frac{\tau^2-\sigma^2}{4}(\gamma_l -\gamma_r)^2 \langle g, C g \rangle} = D_{l,r}(\sqrt{\tau^2-\sigma^2}). 
\end{equation}
$Q(\tau,\sigma)$ positive semi-definite for any $\tau\ge \sigma\ge0$ because $D(s)$ is positive semi-definite for all $s\ge 0$. Corollary \ref{thm_markov} implies that the dynamics $\Lambda(\tau)$ is Markovian. However, it is not of semigroup type, because $\Lambda(\tau-\sigma)\Lambda(\sigma) \neq \Lambda(\tau)$, as can be easily checked from the definition \eqref{eq:maplambda}. {\it We have shown that the fine-grained time scaling can lead to a dynamics that does not satisfy the semigroup composition law but is still Markovian in the sense of CP-divisibility.} This has to be compared with other scaling regimes studied in the literature, such as the weak coupling, singular coupling, or low density scaling, which all lead to a semigroup dynamics.
\medskip

We further develop the above example to illustrate the connection between Markovianity and the form of the generator. We already know that the dynamics is Markovian, so the Kossakowski matrix in any canonical form of the generator will be positive-semidefinite. Let us see how to get to a canonical form without using the Gram-Schmidt procedure in this specific case. Since all the elements $D_{l,r}(\tau)$ are strictly positive, we can rewrite the generator \eqref{generator} as
\begin{equation}
\label{56}
    \mathcal{L}_\tau (\cdot)= \sum_{ \substack{1 \leq l, r \, \leq \nu \\ l\neq r}  } \partial_\tau [\ln(D_{l,r}(\tau))] \ P_l (\cdot) P_r = -\tfrac\tau 2 \|\sqrt\mathcal C g\|^2 \sum_{\substack{1 \leq l, r \, \leq \nu \\ l\neq r} } (\gamma_r - \gamma_l)^2 \ P_l (\cdot) P_r.
\end{equation}
Let us assume that the projections $P_l=|l\rangle\langle l|$ are rank 1, so that $1 \leq l \leq \nu= d = \dim(\h_\s)$, and make the ansatz
\begin{equation}\label{ansatz}
    \mathcal{L}_\tau (\cdot) = \sum_{l=1}^d \eta_{ll}(\tau) \left( P_l (\cdot) P_l - \frac{1}{2} \{ P_l , \cdot \} \right) +  \sum_{\substack{1 \leq l, r \, \leq d \\ l\neq r} } \eta_{lr}(\tau) P_l (\cdot) P_r,
\end{equation}
with real coefficients $\eta_{lr}$ that satisfy $\eta_{lr}=\eta_{rl}$.
We obtain on the one hand from \eqref{ansatz},
\begin{equation}\label{comp1}
  \mathcal{L}_\tau (\ket{j}\bra{k})  = \left(-\tfrac{1}{2}(\eta_{jj} + \eta_{kk}) + \eta_{jk} \right) \ket{j}\bra{k}= -\tfrac{1}{2}\left( \eta_{jj} + \eta_{kk} -2 \eta_{jk} \right)  \ket{j}\bra{k}
\end{equation}
and on the other hand from \eqref{56},
\begin{equation}\label{comp2}
    \mathcal{L}_\tau (\ket{j}\bra{k}) = -\frac\tau 2 \|\sqrt\mathcal C g\|^2 (\gamma_k - \gamma_j)^2 \ket{j}\bra{k}.
\end{equation}
Comparing \eqref{comp1} and \eqref{comp2} gives
\begin{equation}\label{consistency}
    \tau \|\sqrt\mathcal C g\|^2 (\gamma_k - \gamma_j)^2 = \eta_{jj} + \eta_{kk} -2 \eta_{jk} , \qquad k,l =1,\ldots d.
\end{equation}
For $j=k$ the expression \eqref{consistency} is satisfied without giving constraints on $\eta_{jk}$, and the symmetry of the left side implies that $\eta_{kj}=\eta_{jk}$. We have thus $(d^2+d)/2$ unknowns $\eta_{jk}$ with $1\le j\le k\le d$ for the $(d^2-d)/2$ independent equations \eqref{consistency} with $1\le j < k \le d$. This means that we have an underdetermined system and so there is some freedom in choosing the $\eta_{jk}$. 
With the choice $\eta_{jk}(\tau) = \tau \|\sqrt\mathcal C g\|^2 \gamma_j \gamma_k$, $j,k=1,\ldots,d$ the expression \eqref{ansatz} becomes
\begin{align*}
    \mathcal{L}_\tau (\cdot) = \tau \|\sqrt\mathcal C g\|^2  \sum_{l,r=1}^d \gamma_l \gamma_r  \left( P_l (\cdot) P_r - \frac{1}{2} \{ P_l P_r , \cdot \} \right) = \tau \|\sqrt\mathcal C g\|^2 \left( L' (\cdot) L' - \frac{1}{2} \{ (L')^2, \cdot\} \right),
\end{align*}
with a single global jump operator $L'= \sum_{l=1}^d \gamma_l P_l$. We shift it to the traceless $L''=L'- \sum_{l=1}^d\gamma_l $ and then normalize it $L= L''/\sqrt{\sum_l\gamma_l^2 +(d-2)(\sum_l\gamma_l)^2}$, so that we arrive at the generator 
\begin{equation}
    \mathcal{L}_\tau (\cdot) = \Gamma(\tau) 
       \Big( L (\cdot) L - \frac{1}{2} \{ L^2, \cdot\} \Big),\qquad \Gamma(\tau) = \tau \|\sqrt\mathcal C g\|^2 \left(\sum_{l=1}^d \gamma_l^2 + (d-2)\Big(\sum_{l=1}^d\gamma_l\Big)^2 \right).
\end{equation}
This is a canonical diagonal form of $\mathcal L_\tau$. Theorem \ref{thm_generator} implies that the dynamics is CP-divisible.

\section{Ultrastrongly coupled spin-boson models}
\label{sec:spinboson}

The main result of Section \ref{sec:finegraining}, 
Theorem \ref{thm:2}, describes the evolution of the system density matrix
\begin{equation}
\label{scdmat'}
\varrho_\s(\tau) = \llim {\rm tr}_\r\big(e^{-it(H_0+\lambda G\otimes X)}\rho_\s\otimes\rho_\r e^{it(H_0+\lambda G\otimes X)}\big)
\end{equation}
where $H_0=H_\s+H_\r$, {\it under the condition \eqref{n10'}},
\begin{equation}
\label{n10''}
\llim e^{-it(H_0+\lambda G\otimes X)}  =  e^{-i\tau G\otimes X}.
\end{equation}
As we have mentioned, \eqref{n10''} is easily shown to be correct if $H_\r$ and $X$ are bounded operators. We now consider the $\s\r$ model introduced in Section \ref{sec:zenomeasuproc}, with the Hamiltonian \eqref{1} and $X=\varphi(g)$, \eqref{3}. Both these two operators are unbounded. Our main result here is Theorem \ref{thm3-n}, which demonstrates that the expression for $\varrho_\s(\tau)$ given in Theorem \ref{thm:2} is correct for this model.

In order to state a regularity condition on the state $\omega_\r$ and the form factor $g$ used to derive our result (which guarantees \eqref{n10''} to hold in a weak sense, see the proof of Theorem \ref{thm3-n}), we start by noticing that $\forall \tau\ge 0$,
$$
\frac{e^{i\tau\omega(k)/\lambda}-1}{i\omega(k)/\lambda} g(k)\rightarrow \tau g(k)
$$
in the $L^2(\rx^3,d^3k)$ sense as $\lambda\rightarrow\infty$.
We then make the following mild regularity assumption:
\begin{itemize}
\item[{\bf (R)}] For all $\tau\ge 0$,
\begin{equation}
\label{55}
\lim_{\lambda\rightarrow\infty} \omega_\r\Big(W\big(\frac{e^{i\tau\omega/\lambda}-1}{i\omega/\lambda} g\big)\Big) = \omega_\r\big(W(\tau g)\big).
\end{equation}
\end{itemize}

{\bf Example. } For a Gaussian $\omega_\r$, \eqref{9} the condition \eqref{55} holds if 
\begin{equation}
\label{conv}
\sqrt\mathcal C \, \frac{e^{i\tau\omega/\lambda}-1}{i\omega/\lambda} g\ \rightarrow \ \sqrt \mathcal C\tau g
\end{equation}
in the $L^2(\rx^3,d^3k)$ sense, as $\lambda\rightarrow\infty$. If the covariance operator is a multiplication operator by a function $C(k)$ then \eqref{conv} holds if $\sqrt{C(k)} g(k)\in L^2 (\rx^3,d^3k)$. For a thermal reservoir state at inverse temperature $\beta>0$, we have $C(k) = \coth(\beta\omega(k)/2)$.

\begin{thm}
\label{thm3-n}
Let $H$ be the spin-Boson Hamiltonian \eqref{1} and consider the system density matrix \eqref{scdmat'}. Assume the condition (R) and let $D_{l,r}(\tau)=\omega_\r(W(\tau(\gamma_r-\gamma_l)g))$. Then for all $\tau\ge 0$,
\begin{equation}
\label{thm4result-n}
\varrho_\s(\tau) = \sum_{l,r=1}^\nu D_{l,r}(\tau) P_l \rho_\s P_r.
\end{equation}
\end{thm}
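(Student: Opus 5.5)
The plan is to prove the statement in weak form. It suffices to show, for every system observable $A$, that ${\rm tr}_\s(\rho_\s(t)A)\to{\rm tr}_\s(\varrho_\s(\tau)A)$ in the limit \eqref{regime0}. Write $U_\lambda(t)=e^{-itH}$ and $U^0_\lambda(t)=e^{-it(H_\r+\lambda G\otimes\varphi(g))}$, i.e.\ the propagator with $H_\s$ removed.

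\emph{Step 1: discard $H_\s$.} Since $H_\s\otimes\bbbone$ is bounded, the Duhamel formula holds on the domain of $H_\r+\lambda G\otimes\varphi(g)$ and extends by density. It gives $\|U_\lambda(t)-U^0_\lambda(t)\|\le t\|H_\s\|$, and this tends to $0$ because $t=\tau/\lambda\to0$. Consequently $\omega_\s\otimes\omega_\r(U_\lambda(t)^*(A\otimes\bbbone)U_\lambda(t))$ differs from the same expression with $U^0_\lambda(t)$ by at most $2t\|H_\s\|\|A\|$. From now on I work with $U^0_\lambda$.

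\emph{Step 2: block decomposition.} Using \eqref{3.0}, the Hamiltonian without $H_\s$ decomposes as $H_\r+\lambda G\otimes\varphi(g)=\sum_l P_l\otimes(H_\r+\lambda\gamma_l\varphi(g))$. Hence $U^0_\lambda(t)=\sum_l P_l\otimes U_l(t)$ with $U_l(t)=e^{-it(H_\r+\lambda\gamma_l\varphi(g))}$. The reduced density matrix then becomes $\sum_{l,r}\omega_\r\big(U_r(t)^*U_l(t)\big)P_l\rho_\s P_r$, so it remains to compute the limits of the scalars $\omega_\r(U_r(t)^*U_l(t))$.

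\emph{Step 3: explicit Weyl form.} In the interaction picture, $e^{itH_\r}U_l(t)$ is the propagator generated by $\lambda\gamma_l\varphi(e^{is\omega}g)$, $0\le s\le t$. Because the commutators of field operators are scalars, the Dyson series resums to
\[
e^{itH_\r}U_l(t)=e^{i\theta_l}\,W(-\gamma_l h_\lambda),\qquad h_\lambda=\lambda\int_0^t e^{is\omega}g\,ds=\frac{e^{i\tau\omega/\lambda}-1}{i\omega/\lambda}\,g .
\]
The phase is $\theta_l=\gamma_l^2\lambda^2\Theta(t)$, where $\Theta(t)$ is a constant times $\int_0^t\!\int_0^s\int_{\rx^3}|g(k)|^2\sin(\omega(k)(s-s'))\,d^3k\,ds'\,ds$. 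This can be checked rigorously by differentiating both sides on a core of finite-particle vectors. It follows that
\[
U_r^*U_l=e^{i(\theta_l-\theta_r)}W(\gamma_rh_\lambda)^*\, W(-\gamma_l h_\lambda)=e^{i(\theta_l-\theta_r)}W\big((\gamma_r-\gamma_l)h_\lambda\big)
\]
up to the orientation of the sign convention. The Weyl product produces no extra CCR phase, since both arguments are real multiples of the same $h_\lambda$.

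\emph{Step 4: limits.} Using $|\sin x|\le\min(1,|x|)$, I bound
\[
\lambda^2|\Theta(t)|\le C\lambda^2t^2\int|g|^2\min(1,\omega t)\,d^3k=C\tau^2\int|g|^2\min(1,\omega\tau/\lambda)\,d^3k ,
\]
which tends to $0$ by dominated convergence, since $g\in L^2$. Hence the phases $\theta_l-\theta_r$ tend to $0$. Condition (R), applied with $g$ replaced by $(\gamma_r-\gamma_l)g$, gives $\omega_\r(W((\gamma_r-\gamma_l)h_\lambda))\to\omega_\r(W(\tau(\gamma_r-\gamma_l)g))=D_{l,r}(\tau)$. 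This yields \eqref{thm4result-n}.

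I expect the main obstacle to be Step 3: making the Weyl-operator factorization of $U_l(t)$ rigorous for the unbounded operators $H_\r$ and $\varphi(g)$, which requires suitable domain and core arguments. A second, smaller risk is controlling the phase in Step 4 without assuming $\langle g,\omega g\rangle<\infty$; the $\min(1,\omega t)$ bound is designed to handle this.
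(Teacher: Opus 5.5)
Your proposal is correct and follows essentially the same route as the paper. The shared steps are: the Duhamel bound $\|e^{-itH}-e^{-itK}\|\le t\|H_\s\|$ to discard $H_\s$; the block decomposition over the spectral projections $P_l$ of $G$; the Weyl-operator formula for $e^{itK_l}e^{-itK_r}$ (with $K_l=H_\r+\lambda\gamma_l\varphi(g)$), whose phase vanishes in the limit by dominated convergence; and condition (R) for the Weyl factor. The differences are minor: you remove $H_\s$ first rather than last, and you derive the Weyl formula from the interaction picture with scalar commutators instead of citing the polaron transformation, which spares you any infrared condition such as $g/\omega\in L^2$.
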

We present the proof of Theorem \ref{thm3-n} in Section \ref{sec:thm4proof}. 
As in Theorem \ref{thm:2}, the dynamics \eqref{thm4result-n} can be written in the form \eqref{pop1}, \eqref{pop2} under the appropriate continuity and integrability condition on
$$
D(s)=\omega_\r(e^{-isX}) = \omega_\r(W(-sg)) = \int_{\mathbb R} d\mu(\alpha) e^{-i\alpha s},
$$ 
where $W$ is the Weyl operator, see \eqref{6} and $\mu$ is the associated Borel probability measure, see \eqref{bb31}. 
\medskip

{\bf Examples. } 
\begin{itemize}
    \item[(i)] For $\omega_\r$ a (centered) Gaussian with covariance $\mathcal C$ we have
\begin{equation}
\omega_\r(W(g)) = e^{-\frac14 \langle g, \mathcal C g\rangle}.
\end{equation}
Then $D(s)=e^{-\frac14 s^2\|\sqrt\mathcal C g\|^2}$ is a Gaussian function and $d\mu(\alpha) = \mu(\alpha) d\alpha$ with
$$
\mu(\alpha) = \frac{1}{\|\sqrt\mathcal C g\|\sqrt\pi } e^{-\frac{\alpha^2}{{\|\sqrt \mathcal C}g\|^2}}.
$$

\item[(ii)] A particular case of (i) is when $\omega_\r$ the equilibrium state at inverse temperature $\beta$. The covariance operator is the multiplication by the function $C(\omega) = \coth(\beta\omega/2)$, so 
$$
\|\sqrt{\mathcal C}g\|^2 = \int_{\mathbb R^3} |g(k)|^2 \coth(\beta\omega/2) d^3k.
$$
The zero temperature case (vacuum) corresponds to $\mathcal C=\bbbone$. 
If instead of the thermal distribution the reservoir has an arbitrary energy distribution $\varrho(k)\ge 0$, that is,
$$
\omega_\r\big(a^\dag(k)a(l)\big) = \varrho(k) \delta(k-l),
$$
then the covariance operator is the multiplication with the function $C(k) = 1+2\varrho(k)$. (The thermal distribution is $\varrho(k)=(e^{\beta \omega}-1)^{-1}$.) 

\item[(iii)] A coherent state is given by $W(f)|\Omega\rangle$, where $f\in L^2({\mathbb R^3}, d^3k)$ is fixed and $|\Omega\rangle$ is the vacuum vector. Then 
$$
D(s)=\omega_\r(W(-sg)) = \langle \Omega|W(-f) W(-sg)W(f)|\Omega\rangle = e^{-i s\,{\rm Im}\langle f,g\rangle} e^{-\frac14 s^2 \|g\|^2}.
$$
The corresponding density $\mu$, \eqref{mm39}, becomes
\begin{equation}
\mu(\alpha) = \frac{1}{\sqrt\pi \| g\| } \exp\left(- \frac{\big(\alpha - {\rm Im}\langle f,g\rangle \big)^2 }{ \|g\|^2 }\right).
\end{equation}
\end{itemize}

\subsection{Proof of Theorem \ref{thm3-n}} 
\label{sec:thm4proof}

Define the operator 
\begin{equation}
K =  H_\r +\lambda G\otimes\varphi(g).
\label{HK}
\end{equation}
We have for any system observable $A$,
\begin{equation}
\label{068}
\omega_\s\otimes\omega_\r(e^{it K} Ae^{-it K}) 
= \sum_{l,r} \omega_\s(P_l AP_r) \omega_\r(e^{it (H_\r+\lambda\gamma_l\varphi(g))} e^{-it(H_\r+\lambda\gamma_r\varphi(g))}\big).
\end{equation}
The reservoir observable appearing in the state $\omega_\r$ can be simplified by using the following formula, valid for all $x,y\in\mathbb R$,
\begin{equation}
\label{68}
e^{it (H_\r+x\varphi(g))} e^{-it (H_\r+y\varphi(g))} =e^{-\frac i2 (x^2-y^2){\rm Im}\langle\frac{e^{i\omega t}-1-i\omega t}{\omega^2}g,g\rangle} W\Big((x-y)\frac{e^{i\omega t}-1}{i\omega}g\Big).
\end{equation}
This equality can be derived using the polaron transformation, see for instance \cite{Marcantoni-Merkli}, Lemma 1. Using \eqref{68} in \eqref{068} gives
\begin{equation}
\omega_\s\otimes\omega_\r(e^{it K} Ae^{-it K}) 
= \sum_{l,r=1}^\nu \omega_\s(P_l A P_r) \, e^{-\frac i2 \lambda^2(\gamma^2_l-\gamma_r^2) {\rm Im} \langle (\frac{e^{i\omega t} -1 -i\omega t}{\omega^2}g,g\rangle}  \omega_\r(W\big(\lambda(\gamma_l-\gamma_r)\frac{e^{i\omega t}-1}{i\omega} g\big)).
\label{bh1}
\end{equation}
Next with $\lambda t=\tau$, we get
\begin{equation}
\lambda^2 {\rm Im} \langle \frac{e^{i\omega t} -1 -i\omega t}{\omega^2}g,g\rangle  = \tau^2 {\rm Im} \langle \frac{e^{i\omega t} -1 -i\omega t}{(\omega t)^2}g,g\rangle. 
\end{equation}
For every $\omega>0$ we have $\lim_{t\rightarrow 0}{\rm Im} \frac{e^{i\omega t} -1 -i\omega t}{(\omega t)^2} =0$ and we have $\sup_{x>0} |\frac{e^{ix} -1 -ix}{x^2}|<\infty$. Therefore, by the Lebesgue Dominated Convergence Theorem,
\begin{equation}
\label{bh2}
\llim \lambda^2 {\rm Im} \langle \frac{e^{i\omega t} -1 -i\omega t}{\omega^2}g,g\rangle  = 0.
\end{equation}
The regularity assumption \eqref{55} together with the equations \eqref{bh1}, \eqref{bh2} 
show that 
\begin{equation}
\label{bh5}
\llim \omega_\s\otimes\omega_\r(e^{it K} Ae^{-it K}) = \sum_{l,r=1}^\nu \omega_\s(P_l A P_r)  \omega_\r\big(W(\tau(\gamma_l-\gamma_r) g)\big).
\end{equation}
Note that the right hand side of \eqref{bh5} equals
\begin{eqnarray}
\lefteqn{
\sum_{l,r=1}^\nu \omega_\s\otimes\omega_\r\big( W(\tau\gamma_lg)\, P_lAP_r\, W(-\tau\gamma_rg)\big)}\nonumber\\
&=&\sum_{l,r=1}^\nu \omega_\s\otimes\omega_\r\big( e^{i\tau \gamma_l\varphi(g)}\, P_lAP_r\,  e^{-i\tau \gamma_r\varphi(g)}\big)
\nonumber\\
&=&\sum_{l,r=1}^\nu \omega_\s\otimes\omega_\r\big( e^{i\tau G\otimes\varphi(g)}\, P_lAP_r\,  e^{-i\tau G\otimes \varphi(g)}\big)\nonumber\\
&=& \omega_\s\otimes\omega_\r\big( e^{i\tau G\otimes\varphi(g)} A e^{-i\tau G\otimes \varphi(g)}\big).
\end{eqnarray}
So \eqref{bh5} shows that 
\begin{equation} 
\llim \omega_\s\otimes\omega_\r(e^{it K} (A\otimes\bbbone_\r)e^{-it K}) = \omega_\s\otimes\omega_\r\big( e^{i\tau G\otimes\varphi(g)} (A\otimes\bbbone_\r) e^{-i\tau G\otimes \varphi(g)}\big).
\label{76}
\end{equation}
This means that in the considered limit, $H_\r$ does not play any role in the dynamics of system observables generated by $K$, as only the interaction term in \eqref{HK} survives. 
\medskip

Next we deal with the dynamics when $H_\s$ does not vanish. The total Hamiltonian is given by $H=H_\s+H_\r+\lambda G\otimes\varphi(g)$. We have
\begin{equation}
e^{-it H} - e^{-itK} = \int_0^t e^{-isH}(-iH_\s) e^{-i(t-s)K}ds,
\end{equation}
which implies
\begin{equation}
\label{HH}
\big\| e^{-it H} - e^{-itK}\big\| \le t \|H_\s\|.
\end{equation}
Therefore, in the limit $t\rightarrow 0$, $\lambda\rightarrow\infty$, $\lambda t=\tau$ fixed, the system Hamiltonian does not play a role, 
\begin{equation}
\llim \omega_\s\otimes\omega_\r(e^{it H} Ae^{-it H}) =  \llim \omega_\s\otimes\omega_\r(e^{it K} Ae^{-it K}).
\label{bh6}
\end{equation}
The right side of \eqref{bh6} is given by \eqref{bh5}, which is equivalently expressed as ${\rm tr}(\varrho_\s(\tau) A)$ with $\varrho_\s(\tau)$ given in \eqref{thm4result-n}. This concludes the proof of Theorem \ref{thm3-n}.\hfill \qed

\section*{Acknowledgements} S.M.~received financial support under the Horizon Europe research and innovation programme through the MSCA project ConNEqtions, n.~101056638, and the ERC StG MaTCh, grant agreement n.~101117299. S.M.~also gratefully acknowledges funding from the Italian Ministry of University and Research and Next Generation EU through the PRIN 2022 project ONES, CUP:D53C24003430001. The work of S.M. was performed under the auspices of GNFM-INDAM. M.M.~acknowldges the support from a Discovery Grant of NSERC (Natural Sciences and Engineering Research Council of Canada) as well as the support and hospitality from the Laboratoire J.~A.~Dieudonn\'e at the Universit\'e C\^ote D'Azur, where this work was started.

\end{document}